\newacronym{MIMO}{MIMO}{multiple-input-multiple-output}
\newacronym{XL-MIMO}{XL-MIMO}{extremely large-scale MIMO}
\newacronym{5G}{5G}{fifth-generation}
\newacronym{mmWave}{mmWave}{millimeter-wave}
\newacronym{cmWave}{cmWave}{centimeter-wave}
\newacronym{BS}{BS}{base station}
\newacronym{UE}{UE}{user equipment}
\newacronym{MF}{MF}{matched-filter}
\newacronym{ZF}{ZF}{zero-forcing}
\newacronym{LMMSE}{LMMSE}{linear minimum mean square error}
\newacronym{MSE}{MSE}{mean square error}
\newacronym{MPA}{MPA}{message passing algorithm}
\newacronym{AMP}{AMP}{approximate message passing}
\newacronym{GAMP}{GAMP}{generalized AMP}
\newacronym{VAMP}{VAMP}{vector AMP}
\newacronym{OAMP}{OAMP}{orthogonal AMP}
\newacronym{GaBP}{GaBP}{Gaussian belief propagation}
\newacronym{EP}{EP}{expectation propagation}
\newacronym{MF-EP}{MF-EP}{matched-filter EP}
\newacronym{EC}{EC}{expectation consistent}
\newacronym{OvEP}{OvEP}{overlapping block partitioning EP}
\newacronym{NOvEP}{NOvEP}{non-overlapping block partitioning EP}
\newacronym{SE}{SE}{state evolution}
\newacronym{AWGN}{AWGN}{additive white Gaussian noise}
\newacronym{PDF}{PDF}{probability density function}
\newacronym{FLOPs}{FLOPs}{floating point operations}
\newacronym{KL}{KL}{Kullback-Leibler}
\newacronym{ML}{ML}{maximum likelihood}
\newacronym{i.i.d.}{i.i.d.}{independent and identically distributed}
\newacronym{BER}{BER}{bit error rate}
\newacronym{QAM}{QAM}{quadrature amplitude modulation}
\newacronym{QPSK}{QPSK}{quadrature phase-shift keying}
\newtheorem{lemma}{Lemma}
\newtheorem{proposition}{Proposition}
\newcommand{\figsize}{0.58}
\begin{document}

% \title{Expectation Propagation With \\ Overlapping Block-Wise LMMSE for \\ Highly Correlated MIMO Systems}

\title{Expectation Propagation-Based Signal Detection \\ for Highly Correlated MIMO Systems}

\author{
Kabuto Arai,~\IEEEmembership{Graduate Student Member, IEEE},  
Takumi Yoshida,~\IEEEmembership{Graduate Student Member, IEEE},  \\
Takumi Takahashi,~\IEEEmembership{Member, IEEE},  
Koji Ishibashi,~\IEEEmembership{Senior Member, IEEE},
% 
% <-this % stops a space
\thanks{
Kabuto Arai, Takumi Yoshida, and Koji Ishibashi are with the Advanced Wireless and Communication Research Center (AWCC), The University of Electro-Communications, Tokyo 182-8285, Japan (e-mail: \{k.arai, t.yoshida\}@awcc.uec.ac.jp, koji@ieee.org)}
\thanks{
Takumi Takahashi is with the Graduate School of Engineering, The University of Osaka, Yamada-oka 2-1, Suita 565-0871, Japan, (email: takahashi@comm.eng.osaka-u.ac.jp).
}
}

% 
% The paper headers
\markboth{Journal of \LaTeX\ Class Files,~Vol.~14, No.~8, August~2021}%
{Shell \MakeLowercase{\textit{et al.}}: Sample article using IEEEtran.cls for IEEE Journals}

% \IEEEpubid{0000--0000/00\$00.00~\copyright~2021 IEEE}
% Remember, if you use this you must call \IEEEpubidadjcol in the second
% column for its text to clear the IEEEpubid mark.

\maketitle

\begin{abstract}
    Large-scale \acf{MIMO} systems typically operate in dense array deployments with limited scattering environments, leading to highly correlated and ill-conditioned channel matrices that severely degrade the performance of message-passing-based detectors.
    To tackle this issue, this paper proposes an \ac{EP}-based detector, termed \acf{OvEP}.
    In OvEP, the large-scale measurement vector is partitioned into partially overlapping blocks.
    For each block and its overlapping part, a low-complexity \acf{LMMSE}-based filter is designed according to the partitioned structure.
    The resulting LMMSE outputs are then combined to generate the input to the denoiser.
    In this combining process, subtracting the overlapping-part outputs from the block outputs effectively mitigates the adverse effects of inter-block correlation induced by high spatial correlation. 
    The proposed algorithm is consistently derived within the EP framework, and its fixed point is theoretically proven to coincide with the stationary point of a relaxed \acf{KL} minimization problem.
    The mechanisms underlying the theoretically predicted performance improvement are further clarified through numerical simulations.
    The proposed algorithm achieves performance close to conventional LMMSE-EP with lower computational complexity.
\end{abstract}

\glsresetall

\begin{IEEEkeywords}
    Expectation propagation (EP), 
    highly correlated channels, 
    % message-passing,
    multiple-input-multiple-output (MIMO).
\end{IEEEkeywords}

\IEEEpeerreviewmaketitle
\glsresetall

\section{Introduction}

\Ac{MIMO} systems, particularly massive \ac{MIMO}~\cite{2014Larsson_mMIMO} and \ac{XL-MIMO}~\cite{2024Wang_XL_MIMO_tutorial}, have emerged as pivotal technologies to meet the demand for enhanced spectral efficiency in next-generation wireless networks.
Densely deploying array antennas at a \ac{BS} enables the exploitation of the spatial degrees of freedom for multiplexing and diversity, thereby improving spectral efficiency~\cite{2004Tse_DM_tradeoff}.
To fully harness these abundant spatial degrees of freedom, accurate and efficient receiver design for \ac{MIMO} signal detection is one of the essential tasks.

Although the \ac{ML} detector achieves optimal detection performance, its computational cost scales exponentially with the number of BS antennas, which makes it impractical in large-scale systems~\cite{2019Albreem_mMIMO_detection}.
In contrast, linear detectors such as \ac{MF}, \ac{ZF}, and \ac{LMMSE} estimators reduce the computational burden but are inherently limited by the constraints of linear estimation.
To enhance detection performance with low computational complexity in large-scale systems, various \acp{MPA} have been proposed~\cite{2009Donoho_AMP, 2011Rangan_GAMP, 2003Kabashima_GaBP, 2015Meng_MFEP, 2019Rangan_VAMP, 2017Ma_OAMP}.
In general, \ac{MPA}-based approaches perform iterative updates by alternating between linear estimation, typically using an \ac{MF} or \ac{LMMSE} filter, and nonlinear inference via a denoiser function based on the prior distribution of the discrete signal, thereby progressively improving estimation accuracy.

One of the representative \acp{MPA} is \ac{AMP}~\cite{2009Donoho_AMP}.
Several variants, including \ac{GAMP}~\cite{2011Rangan_GAMP}, \ac{GaBP}~\cite{2003Kabashima_GaBP,2019Takahashi_ASB}, and MF-EP~\cite{2015Meng_MFEP}, have been proposed.
Since these algorithms employ the \ac{MF} without requiring matrix inversion during the iterative process, their computational complexity remains relatively low.
Their detection performance can achieve the Bayes-optimal performance in the large-system limit when the channel matrix follows zero-mean \ac{i.i.d.} Gaussian distribution (i.e., Rayleigh fading channels).
Furthermore, the asymptotic dynamics of the \ac{MSE} across iterations can be rigorously characterized using \ac{SE}~\cite{2011Bayati_AMP}.
Building on these theoretical foundations, \ac{AMP}-based low-complexity algorithms have been extensively investigated for applications in wireless communication systems~\cite{2018Mo_CE_ADC, 2014Wu_AMP_MIMO, 2022Takahahi_layerd, 2025Ueda_GF_AMP}.

However, the assumption of \ac{i.i.d.} Rayleigh channels is only valid under rich scattering environments with large angular spreads, numerous propagation paths, and sufficient antenna spacing at the BS~\cite{2005David_Fundamental}.
In practice, large-scale MIMO is typically deployed above the mid-band spectrum, including the \ac{cmWave} and \ac{mmWave} bands, where the shorter wavelengths allow the integration of large antenna arrays within limited physical space.
In such environments, the channel usually consists of a few dominant propagation paths with narrow angular spread due to severe path loss and blockage effects~\cite{2024Liu_model_CmMmST,2015Rappaport_mmWave_model}.
Furthermore, to accommodate a large number of antennas within restricted space, antenna elements are often placed with sub-wavelength spacing.
Consequently, the \ac{i.i.d.} Rayleigh channel assumption no longer holds in practical scenarios, and the resulting channel matrices become highly spatially correlated and ill-conditioned~\cite{2022Du_MIMO_correlation, 2015Gao_MIMO_correlation, 2022Dong_XL_correlation}.
Under these conditions, the convergence performance of \ac{AMP}-based algorithms degrades owing to the violation of the underlying assumptions~\cite{2019Rangan_AMP_damping}.

To improve detection performance across a wide range of channel matrices, alternative \acp{MPA}, namely \ac{VAMP} \cite{2019Rangan_VAMP} and \ac{OAMP} \cite{2017Ma_OAMP}, have been proposed.
Although developed independently through different methodologies, they are essentially the same algorithm within the Bayesian regime, also referred to as LMMSE-EP.
These algorithms can be derived from the \ac{EP}~\cite{2001Minka_EP} and \ac{EC}~\cite{2005Opper_EC,2016Fletcher_GeneralizedEC} frameworks with diagonally-restricted forms.
They can achieve Bayes-optimal performance in the large-system limit when the channel matrix is right-unitarily invariant, which includes zero-mean \ac{i.i.d.} Gaussian channels as a special case~\cite{2020Takeuchi_EP}.
Leveraging the property, these algorithms have been extensively applied to MIMO signal detection even under spatially correlated channels~\cite{2014Cespedes_EP_MIMO, 2021Ito_Bilinear_correlated, 2018He_OAMPNet, 2020He_OAMPNet2}.
However, they require full-dimensional matrix operations including multiplications and inversions, causing the computational complexity to increase significantly as the system size scales up.

To balance computational complexity and detection performance, an \ac{EP}-based algorithm has been proposed in~\cite{2020Wang_EP_subarray}, hereafter referred to as \emph{\ac{NOvEP}} in this paper.
In this algorithm, the large-scale measurement vector is partitioned into small-size non-overlapping blocks.
For each block, an \ac{LMMSE}-based filter produces tentative estimates, which are then combined to generate the input to the denoiser.
Subsequently, nonlinear estimation through the denoiser function generates the approximate posterior mean.
This iterative process progressively enhances estimation accuracy.
Since the \ac{LMMSE} filter is designed for each small-size block, \ac{NOvEP} significantly reduces the computational cost of matrix inversion compared to conventional \ac{LMMSE}-\ac{EP}.
Moreover, owing to block-wise decentralized processing that reduces the amount of data for baseband signals, this algorithm has been applied to efficient signal detection in \ac{XL-MIMO} systems~\cite{2020Wang_EP_subarray, 2025Arai_XL}, and cell-free MIMO systems~\cite{2021He_SubEP_CF,2025He_SubEP_CF}.
However, under highly correlated channels, the \ac{LMMSE} outputs across blocks also become strongly correlated.
Consequently, combining these correlated outputs may produce outliers in the denoiser input.
%due to their high correlation.
% 
In iterative \acp{MPA}, such outliers appearing in the early iterations lead to error propagation in subsequent iterations~\cite{2022Tamaki_GAMP_feedback,2024Furudoi_ADD,2019Takahashi_ASB}.
Therefore, designing computationally efficient and accurate signal detection algorithms for highly correlated channels still remains a significant challenge.

To address these issues, this paper proposes an \ac{EP}-based detection algorithm, termed \ac{OvEP}.
The main contributions are summarized as follows:
\begin{itemize}
    \item 
        The proposed method partitions the large-scale measurement vector into small-size overlapping blocks, unlike the conventional \textit{NOvEP}~\cite{2020Wang_EP_subarray}.
        For each block, an \ac{LMMSE}-based filter is designed, and the resulting outputs are combined to generate the denoiser input. 
        In highly correlated channels, the inter-block correlation between the block-wise \ac{LMMSE} outputs becomes significant.
        To address this, the proposed method designs \ac{LMMSE} filters not only for the block parts but also for their overlapping parts. 
        By subtracting the \ac{LMMSE} outputs of the overlapping parts from those of the corresponding blocks, the inter-block correlation can be substantially reduced.
        Consequently, the proposed method improves detection performance compared with the conventional \textit{NOvEP}, with only a marginal increase in complexity.
    \item 
        The proposed algorithm is consistently derived within the \ac{EP} framework by exploiting the partitioned measurements in both block and overlapping parts.
        In the \ac{EP} framework, the approximate posterior, represented as a product of multiple Gaussian factors, is obtained by minimizing the \ac{KL} divergence.
        Since direct optimization over all factors is intractable, the original \ac{KL} minimization problem is decomposed into multiple sub-problems of local divergence minimization, each associated with a specific factor~\cite{2005minka_divergence}.
        These factors are then sequentially updated through alternating minimization of the local divergence.
        This derivation highlights the theoretical foundation of the proposed algorithm, distinguishing it from heuristic approaches.
    \item 
        In general, the alternating minimization of the local divergence does not necessarily minimize the original \ac{KL} divergence~\cite{2005minka_divergence}.
        Nevertheless, we theoretically prove that the fixed point of the proposed algorithm, derived through alternating optimization, corresponds to a stationary point of a relaxed version of the original \ac{KL} minimization.
    \item 
        To investigate the fundamental causes behind the performance improvement of the proposed algorithm, we provide both theoretical and empirical analyses.
        We verify that the \ac{MSE} of the denoiser input is determined by the inter-block correlation among the block-wise \ac{LMMSE} outputs.
        By subtracting the \ac{LMMSE} outputs of the overlapping parts, the proposed algorithm can reduce this inter-block correlation, thereby lowering the \ac{MSE} of the denoiser input. 
        Furthermore, reducing the inter-block correlation helps prevent the occurrence of outliers and keeps the denoiser from operating in its inactive region~\cite{2024Furudoi_ADD}, thereby mitigating error propagation during the iterative process. 
        These underlying mechanisms of the proposed algorithm are further validated and illustrated through numerical simulations.
\end{itemize}

%%%%%%%%%%
\textit{Notation}:
For a matrix $\mathbf{A}$, $[\mathbf{A}]_{i,j}$ and $\Re [\mathbf{A} ]$ denote the $(i,j)$ entry and real part of $\mathbf{A}$. 
$(\cdot)^*$, $(\cdot)^\mathrm{T}$, $(\cdot)^\mathrm{H}$ denote conjugate, transpose, and Hermitian transpose. 
$\mathbf{I}_N$, $\mathbf{0}_M$, and $\mathbf{1}_M$ are the $N\times N$ identity matrix, the $M\times1$ zero vector, and the $M\times1$ all-ones vector. 
$\odot$ and $\oslash$ denote Hadamard (element-wise) product and division.
For a vector $\mathbf{x}$ and a matrix $\mathbf{A}$, $\mathbf{D}(\mathbf{x})$ and $\mathbf{d}(\mathbf{A})$ are the diagonal matrix constructed from $\mathbf{x}$ and the vector constructed from the diagonal elements of $\mathbf{A}$.
$\|\mathbf{x}\|^2_{\bm{\gamma}} \triangleq \sum_m \gamma_m |x_m|^2$ denotes the weighted $\ell_2$ norm. 
For an index set $\mathcal{N}$, $\mathbf{x}_{\mathcal{N}}$ and $\mathbf{A}_{\mathcal{N}}$ extract the rows of $\mathbf{x}$ and $\mathbf{A}$ indexed by $\mathcal{N}$.
For a PDF $p(x)$, $\mathbb{E}_{p(x)}[x]$ and $\mathbb{V}_{p(x)}[x]$ denote expectation and variance. 
$\int_{\mathbf{x}\setminus x_m} f(\mathbf{x})$ integrates $f(\mathbf{x})$ over all variables except $x_m$. 
$\mathcal{CN}(\bm{\mu},\mathbf{\Sigma})$ denote the circularly symmetric complex Gaussian distribution with mean $\bm{\mu}$ and covariance $\mathbf{\Sigma}$. 
$\mathrm{KL}(p||q) \triangleq \int p(x)\ln \tfrac{p(x)}{q(x)}$ and $\mathrm{H}(p) \triangleq -\int p(x)\ln p(x)$ are the KL divergence and differential entropy. 
For a functional $\mathcal{L}(q)$, $\delta \mathcal{L}/\delta q(\mathbf{x})$ denotes the functional derivative with respect to a function $q(x)$~\cite{2006Bishop_PRML}.
$\delta(\cdot)$ denotes the Dirac delta function.
%%%%%%%%%%

\section{System Model}
\label{sec:system}

We consider an uplink \ac{MIMO} system, consisting of a \ac{BS} equipped with $N$ antennas and $M$ \acp{UE}, each with a single antenna.
The $m$-th UE transmits a discrete symbol $x_m \in \mathcal{X}$, independently drawn from a $Q$-\ac{QAM} constellation $\mathcal{X} \triangleq \{ \mathrm{x}_1, \ldots, \mathrm{x}_Q \}$.
Defining the transmit signal vector $\mathbf{x} \triangleq [x_1, \ldots, x_M]^\mathrm{T} \in \mathcal{X}^{M \times 1}$, the measurement vector at the BS, $\mathbf{y} \triangleq [y_1, \ldots, y_N]^\mathrm{T} \in \mathbb{C}^{N \times 1}$, can be expressed as
\begin{align}
    \label{eq:y}
    \mathbf{y} = \mathbf{H} \mathbf{x} + \mathbf{z},
\end{align}
where 
% H
$\mathbf{H} \in \mathbb{C}^{N \times M}$ denotes the channel matrix between the BS and $M$ UEs, and 
% z
$\mathbf{z}$ denotes an \ac{AWGN} vector following $\mathcal{CN}(\mathbf{0}_N, \sigma_z \mathbf{I}_N)$.
The final goal is to estimate the transmit vector $\mathbf{x}$ using the measurement vector $\mathbf{y}$ and the channel matrix $\mathbf{H}$.

Based on the linear measurement model in \eqref{eq:y}, the conditional \ac{PDF} $p(\mathbf{y}|\mathbf{x})$ and the prior distribution $p(\mathbf{x})$ are given by 
\begin{align}
    % y|x
    \label{eq:p_y}
    p(\mathbf{y}|\mathbf{x}) &= \prod_{n=1}^N p(y_n|\mathbf{x}) = \mathcal{CN}(\mathbf{y}| \mathbf{Hx}, \ \sigma_z \mathbf{I}_N), \\
    % x
    p(\mathbf{x}) =& \prod_{m=1}^M p(x_m) = \prod_{m=1}^M
    \frac{1}{Q} \sum_{\mathrm{x} \in \mathcal{X}} \delta(x_m - \mathrm{x}),
\end{align}
where the mean and covariance of $\mathbf{x}$ are given by $\mathbb{E}_{p(\mathbf{x})} [\mathbf{x}] = \mathbf{0}_M$ and $\mathbb{E}_{p(\mathbf{x})} [\mathbf{x} \mathbf{x}^\mathrm{H}] = \sigma_x \mathbf{I}_M$, respectively.

Using these PDFs, the posterior distribution is given by 
\begin{align}
    \label{eq:post}
    p(\mathbf{x}|\mathbf{y}) = p(\mathbf{y}|\mathbf{x}) p(\mathbf{x})  / p(\mathbf{y}), 
\end{align}
where $p(\mathbf{y}) = \int_{\mathbf{x}} p(\mathbf{y}|\mathbf{x}) p(\mathbf{x})$.
Since the posterior involves a multidimensional integration over the discrete-valued vector $\mathbf{x}$, exact computation is infeasible with reasonable complexity.
Therefore, this paper aims to approximate the posterior $p(\mathbf{x}|\mathbf{y})$ within a tractable complexity using the \ac{EP} framework.

\section{Proposed Method}

%%%%%%%%%%%%%%%%%%%%%%%%%%%
% Block partitioning
\begin{figure}[t]
    \centering
    \includegraphics[scale=0.3]{./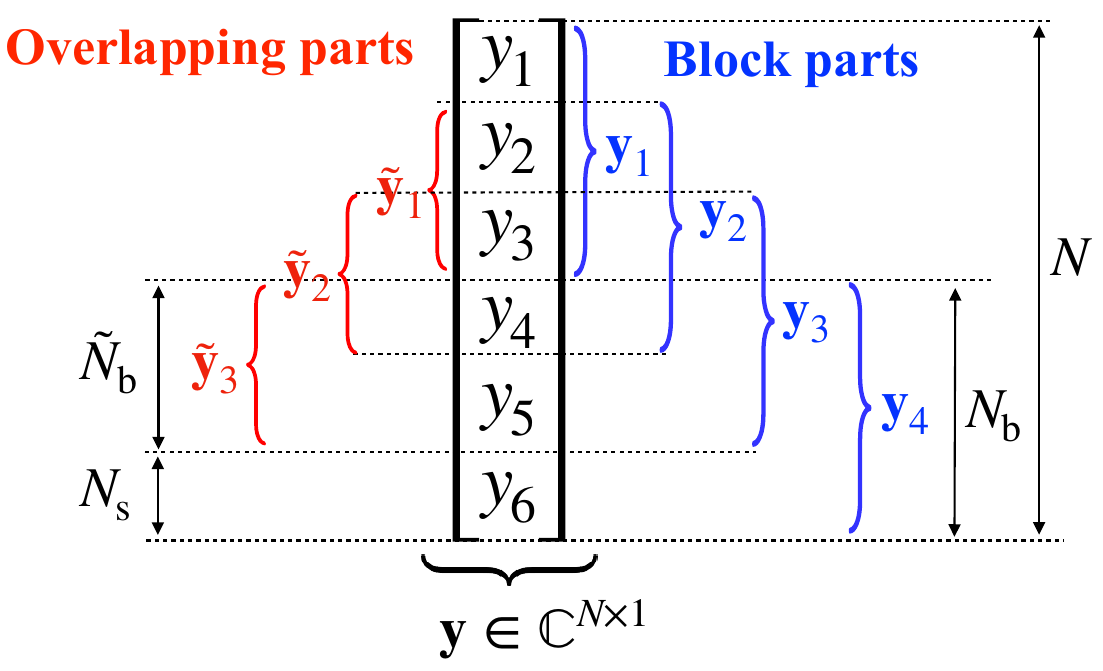}
    \caption{An example of block partitioning with $N=6,\ N_\mathrm{b}=3,\ N_\mathrm{s}=1,\ \tilde{N}_\mathrm{b}=2,$ and $L=4$.} 
    \label{fig:Overlaped_Block}
    \vspace{-3ex}
\end{figure}
%%%%%%%%%%%%%%%%%%%%%%%%%%%

\subsection{Measurement Partitioning}

As illustrated in Fig.~\ref{fig:Overlaped_Block}, the measurement vector $\mathbf{y}$ is partitioned into $L$ blocks, each obtained by sliding a window of size $N_\mathrm{b}$ with a shift of $N_\mathrm{s}$.
The number of blocks $L$ is given by $L=(N-N_\mathrm{b})/N_\mathrm{s} + 1$.
The block size $N_\mathrm{b}$ and the shift size $N_\mathrm{s}$ should be selected such that $1 \leq N_\mathrm{s} \leq N_\mathrm{b} \leq N$ and $L$ becomes an integer.
Defining the index set for the $l$-th block as 
\begin{align*}
    \mathcal{N}_l \triangleq \{(l-1) N_\mathrm{s} + 1, \ldots, (l-1) N_\mathrm{s} + N_\mathrm{b} \}, 
    \forall l \in \{1, \ldots, L \},
\end{align*}
the measurement vector for the $l$-th block can be expressed as $\mathbf{y}_l \triangleq \mathbf{y}_{\mathcal{N}_l} \in \mathbb{C}^{N_\mathrm{b} \times 1}$.
It should be noted that each block overlaps with its neighboring blocks by $\tilde{N}_\mathrm{b} \triangleq N_\mathrm{b} - N_\mathrm{s}$ elements.
Let $\tilde{\mathcal{N}}_l \triangleq \mathcal{N}_l \cap \mathcal{N}_{l+1}$ denote the index set for the overlapping elements between the $l$-th and $(l+1)$-th blocks, which is expressed as
\begin{align*}
    \tilde{\mathcal{N}}_l 
    = \{l N_\mathrm{s} + 1, \ldots, l N_\mathrm{s} + \tilde{N}_\mathrm{b} \},\ \forall l \in \{ 1, \ldots, L-1 \}.
\end{align*}
Then, the measurement vector for the $l$-th overlapping part is expressed as $\tilde{\mathbf{y}}_l \triangleq \mathbf{y}_{\tilde{\mathcal{N}}_l} \in \mathbb{C}^{\tilde{N}_\mathrm{b} \times 1}$.

Similarly, the channel matrix $\mathbf{H}$ is also partitioned into block parts $\{ \mathbf{H}_l \}_{l=1}^L$ and overlapping parts $\{ \tilde{\mathbf{H}}_l \}_{l=1}^{L-1}$, defined as  
$\mathbf{H}_l \triangleq \mathbf{H}_{\mathcal{N}_l} \in \mathbb{C}^{N_\mathrm{b} \times M}$ and 
$\tilde{\mathbf{H}}_l \triangleq \mathbf{H}_{\tilde{\mathcal{N}}_l} \in \mathbb{C}^{\tilde{N}_\mathrm{b} \times M}$.

An illustrative example of measurement partitioning is depicted in Fig.~\ref{fig:Overlaped_Block}, where $N=6,\ N_\mathrm{b}=3,\ N_\mathrm{s}=1,\ \tilde{N}_\mathrm{b}=2,\ L=4$.
In the case of $N_\mathrm{b}=N_\mathrm{s}$, there is no overlap between blocks (i.e., $\tilde{\mathcal{N}}_l = \emptyset$), which corresponds to the non-overlapping partitioning model used in~\cite{2020Wang_EP_subarray, 2025Arai_XL}.

Based on the partitioned model in the block and overlapping parts, the measurement equations for the $l$-th block part and the $l$-th overlapping part can be expressed as 
\begin{align}
    \label{eq:y_l}
    \mathbf{y}_l = \mathbf{H}_l \mathbf{x} + \mathbf{z}_l, \quad
    \tilde{\mathbf{y}}_l = \tilde{\mathbf{H}}_l \mathbf{x} + \tilde{\mathbf{z}}_l,
\end{align}
where $\mathbf{z}_l \triangleq \mathbf{z}_{\mathcal{N}_l} \in \mathbb{C}^{N_\mathrm{b} \times 1}$ and 
$\tilde{\mathbf{z}}_l \triangleq \mathbf{z}_{\tilde{\mathcal{N}}_l} \in \mathbb{C}^{\tilde{N}_\mathrm{b} \times 1} $ are the noise vectors.
From \eqref{eq:y_l}, the conditional PDFs for the $l$-th block part and the $l$-th overlapping part are given by
\begin{subequations}
\label{eq:p_y_blk}
\begin{align}
    p(\mathbf{y}_l | \mathbf{x}) &= \prod_{n \in \mathcal{N}_l } p(y_{n}| \mathbf{x}) = \mathcal{CN} (\mathbf{y}_l| \mathbf{H}_l \mathbf{x}, \sigma_z \mathbf{I}_{N_\mathrm{b}}), \\
    \label{eq:p_y_blk_tilde}
    p(\tilde{\mathbf{y}}_l | \mathbf{x}) &= \prod_{n \in \tilde{\mathcal{N}}_l } p(y_{n}| \mathbf{x}) = \mathcal{CN} (\tilde{\mathbf{y}}_l| \tilde{\mathbf{H}}_l \mathbf{x}, \sigma_z \mathbf{I}_{\tilde{N}_\mathrm{b}}).
\end{align}
\end{subequations}
For notational convenience, we set $p(\tilde{\mathbf{y}}_L | \mathbf{x})=1$ in \eqref{eq:p_y_blk_tilde}.
Based on the PDFs in \eqref{eq:p_y_blk}, the posterior in \eqref{eq:post} can be rewritten as 
\begin{align}
    \label{eq:post_div}
    p(\mathbf{x}|\mathbf{y}) \propto \prod_{l=1}^{L} \frac{p(\mathbf{y}_{l} | \mathbf{x})}{p(\tilde{\mathbf{y}}_{l} | \mathbf{x})} p(\mathbf{x}).
\end{align}

\subsection{Algorithm Derivation}
\label{subsec:alg_derivation}

To approximately calculate the true posterior $p(\mathbf{x}|\mathbf{y})$ in \eqref{eq:post_div} with reasonable complexity, we design an approximate posterior $g(\mathbf{x}|\mathbf{y})$ in tractable forms.
Based on the EP framework~\cite{2001Minka_EP}, the approximate posterior is designed by minimizing the \ac{KL} divergence, subject to a Gaussian distribution set $\mathbf{\Phi}$, as
\begin{align}
    \label{eq:KL}
    \mathop{\mathrm{minimize}}_{g \in \mathbf{\Phi} } \ 
    \mathrm{KL} \left( p (\mathbf{x}|\mathbf{y}) || g(\mathbf{x}|\mathbf{y}) \right).
\end{align}

The approximate posterior $g(\mathbf{x}|\mathbf{y})$ is designed, using approximate factors $ \{q_l(\mathbf{x}),\ \tilde{q}_l(\mathbf{x}) \}_{l=1}^L$ and $r(\mathbf{x})$, as 
\begin{align}
    g(\mathbf{x}|\mathbf{y}) 
    &= \frac{1}{Z_g} \prod_{l=1}^{L} \frac{q_l(\mathbf{x})}{\tilde{q}_l(\mathbf{x})} r(\mathbf{x}),
\end{align}
where $Z_g$ is a normalizing constant ensuring $\int_\mathbf{x} g(\mathbf{x} | \mathbf{y})=1$.
As can be seen from the correspondence with \eqref{eq:post_div}, these factors serve to approximate the true PDFs as
$q_l(\mathbf{x}) \simeq p(\mathbf{y}_l | \mathbf{x})$,
$\tilde{q}_l(\mathbf{x}) \simeq p(\tilde{\mathbf{y}}_l | \mathbf{x})$,
$r (\mathbf{x}) \simeq  p(\mathbf{x})$.
These approximate factors are further factorized as
\begin{align*}
    q_l(\mathbf{x}) &\triangleq  \prod_{m=1} ^M q_{l,m}(x_m), \quad 
    \tilde{q}_l(\mathbf{x}) \triangleq  \prod_{m=1} ^M \tilde{q}_{l,m}(x_m), \\
    r (\mathbf{x}) &\triangleq  \prod_{m=1} ^M r_{m}(x_m),
\end{align*}
where $q_{l,m}(x_m)$, $\tilde{q}_{l,m}(x_m)$, and $r_{m}(x_m)$ are Gaussian factors parameterized by their means and precisions, defined as
\begin{subequations}
\begin{align}
    q_{l,m} (x_m) &\propto \exp \left ( - \gamma^q_{l,m} |x_m - \mu^q_{l,m}|^2 \right ), \\
    \tilde{q}_{l,m} (x_m) &\propto \exp \left ( - \tilde{\gamma}^q_{l,m} |x_m - \tilde{\mu}^q_{l,m}|^2 \right ), \\
    r_{m} (x_m) &\propto \exp \left ( - \gamma^r_{m} |x_m - \mu^r_{m}|^2 \right ).
\end{align}
\end{subequations}
The parameter vectors are defined as
$\bm{\pi}^q_{l,m} \triangleq [\mu_{l,m}^q, \gamma_{l,m}^q ]^\mathrm{T}$, 
$\tilde{\bm{\pi}}^q_{l,m} \triangleq [\tilde{\mu}_{l,m}^q, \tilde{\gamma}_{l,m}^q ]^\mathrm{T}$, and 
$\bm{\pi}^r_{m} \triangleq [\mu_{m}^r, \gamma_{m}^r ]^\mathrm{T}$,
which represent unknown parameters and are optimized via KL minimization.

Similar to \eqref{eq:post_div}, 
the parameters $\{\tilde{\mu}_{L,m}^q, \tilde{\gamma}_{L,m}^q\}$ corresponding to $l=L$ are fixed as $\tilde{\mu}_{L,m}^q = \tilde{\gamma}_{L,m}^q = 0$ because no overlapping part exists at $l=L$.

Let $\bm{\Pi} \triangleq \{ \bm{\pi}_{l,m}^q, \tilde{\bm{\pi}}_{l,m}^q, \bm{\pi}_{m}^r \}_{\forall l,m}$ denote the set of an unknown parameters.
The optimal parameter set is obtained by minimizing the KL divergence in \eqref{eq:KL}.
However, since the KL divergence involves an intractable integral with respect to the true posterior $p(\mathbf{x})$, a closed-form solution cannot be derived.
To circumvent this issue, we introduce a target distribution $\hat{p}(\mathbf{x}|\mathbf{y})$ in place of the true posterior $p(\mathbf{x}|\mathbf{y})$ in \eqref{eq:post_div}.
This target distribution is formulated by replacing part of the true posterior with approximate factors, as described in the following subsections.

To solve the KL minimization problem, we adopt an alternating optimization approach, where one parameter is updated at a time while keeping the others fixed. 
The methodologies for estimating $\bm{\pi}_{l,m}^q$, $\tilde{\bm{\pi}}_{l,m}^q$ and, $\bm{\pi}_{m}^r$ are detailed in Sections \ref{subsec:update_q}, \ref{subsec:update_q_tilde}, and \ref{subsec:update_r}, respectively.

\subsubsection{Update $\bm{\pi}_{l,m}^q$ for $q_{l,m}(x_m)$}
\label{subsec:update_q}

When updating the parameter $\bm{\pi}_{l,m}^q$, the other parameters $\bm{\Pi} \setminus \bm{\pi}_{l,m}^q$ are fixed as tentative estimates.
Using the cavity function~\cite{2001Minka_EP,2014Cespedes_EP_MIMO}, defined as $g_{\backslash q_l} (\mathbf{x}) \triangleq g(\mathbf{x} | \mathbf{y}) / q_l(\mathbf{x})$, which removes the objective distribution to be updated in the current step $q_l(\mathbf{x})$ from the approximate posterior $g(\mathbf{x}|\mathbf{y})$, the target distribution $\hat{p}_{q_{l}}(\mathbf{x} | \mathbf{y}) $ for updating $\bm{\pi}_{l,m}^q$ is designed as 
\begin{align}
    \label{eq:p_tg_q}
    \hat{p}_{q_{l}}(\mathbf{x} | \mathbf{y}) 
    &\propto p(\mathbf{y}_l | \mathbf{x}) g_{\backslash q_l} (\mathbf{x}).
\end{align}

With the target distribution $\hat{p}_{q_{l}}(\mathbf{x} | \mathbf{y})$ in \eqref{eq:p_tg_q}, 
the minimization of KL divergence, referred to as local divergence~\cite{2005minka_divergence}, is formulated as 
\begin{align}
    \mathop{\mathrm{minimize}}_{\bm{\pi}_{l,m}^q} \ 
    \mathrm{KL} \left( \hat{p}_{q_l} (\mathbf{x}|\mathbf{y}) || g(\mathbf{x}|\mathbf{y}) \right) 
    \triangleq \mathcal{L}_{l,m}^q (\bm{\pi}_{l,m}^q ),
\end{align}
where the objective function $\mathcal{L}_{l,m}^q (\bm{\pi}_{l,m}^q )$ can be expressed as 
\begin{align*}
    \mathcal{L}_{l,m}^q (\bm{\pi}_{l,m}^q )  =  \ln Z_g - \mathbb{E}_{\hat{p}_{q_l}(x_m | \mathbf{y})} \left [ \ln q_{l,m} (x_m) \right ] + \mathrm{const}.
\end{align*}

Since $\mathcal{L}_{l,m}^q$ is convex with respect to $\bm{\pi}_{l,m}^q$, the optimal condition
$\partial \mathcal{L}_{l,m}^q / \partial \bm{\pi}_{l,m}^q = \mathbf{0}$
yields
\begin{align}
    \label{eq:proj_q}
    g(x_m|\mathbf{y}) \propto \mathrm{proj}_{\mathbf{\Phi}} \left [ \hat{p}_{q_{l}}(x_m | \mathbf{y}) \right ],
\end{align}
where $\mathrm{proj}_\mathbf{\Phi}[p(x)] \triangleq \mathcal{CN}(\mathbb{E}_{p(x)}[x], \mathbb{V}_{p(x)} [x] )$ denotes the projection operator onto the Gaussian distribution set $\mathbf{\Phi}$, enforcing the first and second-order moment-matching.

In \eqref{eq:proj_q}, 
the marginal approximate posterior 
$g(x_m|\mathbf{y}) = \int_{\mathbf{x} \backslash x_m} g(\mathbf{x} | \mathbf{y})$ is given by 
\begin{align}
    \label{eq:g_mg_q}
    g(x_m | \mathbf{y}) \propto q_{l,m} (x_m) w_{l,m} (x_m),
\end{align}
where $w_{l,m}(x_m)$ is defined as
\begin{subequations}    
\label{eq:w_update}
\begin{align}
    \label{eq:w_update_1}
    w_{l,m}(x_m) 
    & \triangleq \frac{1}{q_{l,m}(x_{m})} \prod_{l^\prime=1}^L \frac{q_{l^\prime,m}(x_{m})}{\tilde{q}_{l^\prime,m}(x_{m})} r_{m} (x_{m}) \\
    \label{eq:w_update_2}
    & \propto \exp \left ( - \gamma_{l,m}^w |x_m - \mu_{l,m}^w |^2 \right ).
\end{align}
\end{subequations}
The calculation of the precision $\gamma_{l,m}^w$ and the mean $\mu_{l,m}^w$ of $w_{l,m}(x_m)$ will be provided in \eqref{eq:gamma_w}-\eqref{eq:mu_w} in the following subsections.

In \eqref{eq:proj_q}, 
the marginal target distribution $\hat{p}_{q_l} (x_m |\mathbf{y}) = \int_{\mathbf{x} \backslash x_m} \hat{p}_{q_l} (\mathbf{x} | \mathbf{y})$ can be calculated as 
\begin{align}
    \label{eq:p_tg_mg_q}
    \hat{p}_{{q}_{l}}(x_m | \mathbf{y})
    \propto \int_{\mathbf{x} \setminus x_m} 
    \underbrace{p(\mathbf{y}_l | \mathbf{x}) \prod_{m^\prime=1}^M w_{l,m^\prime} (x_{m^\prime})}_{\triangleq u_l(\mathbf{x})}.
\end{align}
Since $p(\mathbf{y}_l|\mathbf{x})$ and $w_{l,m}(x_m)$ in \eqref{eq:p_tg_mg_q} are Gaussian distributions, $u_l(\mathbf{x})$ is also a Gaussian distribution, given by 
\begin{align*}
    u_l(\mathbf{x}) \propto \exp \left [ - (\mathbf{x} - \bm{\mu}_l^u)^\mathrm{H}  (\mathbf{\Sigma}_l^u)^{-1} (\mathbf{x} - \bm{\mu}_l^u) \right ], 
\end{align*}
where the covariance matrix $\mathbf{\Sigma}_{l}^u \in \mathbb{C}^{M \times M}$ and the mean vector $\bm{\mu}_l^u \in \mathbb{C}^{M \times 1}$ can be calculated as 
\begin{subequations}
\label{eq:mu_sigma_u}
\begin{align}
    \mathbf{\Sigma}_{l}^u &= \left ( \sigma_z^{-1} \mathbf{H}_l^\mathrm{H} \mathbf{H}_l + \mathbf{D} \left ( \boldsymbol{\gamma}_{l}^w \right ) \right )^{-1}, \\ 
    \bm{\mu}_l^u &= \mathbf{\Sigma}_{l}^u \left ( \sigma_z^{-1} \mathbf{H}_l^\mathrm{H} \mathbf{y}_l + \boldsymbol{\gamma}_{l}^w \odot \boldsymbol{\mu}_l^w \right ).
\end{align}
\end{subequations}

Using the property of marginal Gaussian distributions~\cite{2006Bishop_PRML}, the marginal target distribution in \eqref{eq:p_tg_mg_q} is expressed as 
\begin{align}
    \label{eq:p_tg_mg_q_2}
    \hat{p}_{{q}_{l}}(x_m | \mathbf{y}) \propto \exp \left ( - \gamma_{l,m}^u |x_m - \mu_{l,m}^u |^2 \right ),
\end{align}
with
$\mu_{l,m}^u = [\bm{\mu}_l^u]_m$ and 
$\gamma_{l,m}^u = 1 / [\mathbf{\Sigma}_l^u]_{m,m}$.

Substituting \eqref{eq:g_mg_q} and \eqref{eq:p_tg_mg_q_2} into \eqref{eq:proj_q}, the approximate factor $q_{l,m} (x_m)$ can be updated as 
\begin{align}
    q_{l,m} (x_m| \mathbf{y}) \propto \hat{p}_{q_l}(x_m| \mathbf{y}) \big / w_{l,m}(x_m), 
\end{align}
where the precision and the mean of $q_{l,m} (x_m)$ are given by 
\begin{subequations}    
\begin{align}
    \gamma_{l,m}^q & = \gamma_{l,m}^u - \gamma_{l,m}^w, \\
    \mu_{l,m}^q &= \left ( \gamma_{l,m}^u  \mu_{l,m}^u - \gamma_{l,m}^w \mu_{l,m}^w \right ) / \gamma_{l,m}^q.
\end{align}
\end{subequations}

\subsubsection{Update $\tilde{\bm{\pi}}_{l,m}^q$ for $\tilde{q}_{l,m}(x_m)$}
\label{subsec:update_q_tilde}
Similar to the update of $\bm{\pi}_{l,m}^q$, the parameter $\tilde{\bm{\pi}}_{l,m}^q$ is updated while the other parameters $\bm{\Pi} \setminus \tilde{\bm{\pi}}_{l,m}^q$ are fixed as tentative estimates.
The minimization of the local divergence for $\tilde{\bm{\pi}}_{l,m}^q$ is then formulated as
\begin{align*}
    \mathop{\mathrm{minimize}}_{\tilde{\bm{\pi}}_{l,m}^q} \ 
    \mathrm{KL} \left( \hat{p}_{\tilde{q}_l} (\mathbf{x}|\mathbf{y}) || g(\mathbf{x}|\mathbf{y}) \right) ,
\end{align*}
where $\hat{p}_{\tilde{q}_l} (\mathbf{x}|\mathbf{y})$ is the target distribution, given by
\begin{align}
    \label{eq:p_tg_q_tilde}
    \hat{p}_{\tilde{q}_l} (\mathbf{x}|\mathbf{y})
    &\propto p(\tilde{\mathbf{y}}_l | \mathbf{x}) g_{\backslash \tilde{q}_l} (\mathbf{x}),
\end{align}
with the cavity distribution 
$g_{\backslash \tilde{q}_l} (\mathbf{x}) \triangleq g(\mathbf{x} |\mathbf{y}) / \tilde{q}_{l,m} (x_m)$.

Through the same procedure as that used for  deriving $\bm{\pi}_{l,m}^q$, the optimal condition of $\tilde{\bm{\pi}}_{l,m}^q$ is given by 
\begin{align}
    \label{eq:proj_q_tilde}
    g(x_m|\mathbf{y}) \propto \mathrm{proj}_{\mathbf{\Phi}} \left [ \hat{p}_{\tilde{q}_{l}}(x_m | \mathbf{y}) \right ].
\end{align}

The marginal approximate posterior $g(x_m|\mathbf{y}) = \int_{\mathbf{x} \backslash x_m} g(\mathbf{x} | \mathbf{y})$ in \eqref{eq:proj_q_tilde} is expressed as
\begin{align}
    \label{eq:g_mg_q_tilde}
    g(x_m | \mathbf{y}) \propto \tilde{q}_{l,m} (x_m) \tilde{w}_{l,m} (x_m),
\end{align}
where $\tilde{w}_{l,m}(x_m)$ is defined as
\begin{subequations}    
\label{eq:w_tilde_update}
\begin{align}
    \tilde{w}_{l,m}(x_m) 
    & \triangleq \frac{1}{\tilde{q}_{l,m}(x_{m})} \prod_{l^\prime=1}^L \frac{q_{l^\prime,m}(x_{m})}{\tilde{q}_{l^\prime,m}(x_{m})} r_{m} (x_{m}) \\
    & \propto \exp \left ( - \tilde{\gamma}_{l,m}^w |x_m - \tilde{\mu}_{l,m}^w |^2 \right ).
\end{align}
\end{subequations}

The calculation of the precision $\tilde{\gamma}_{l,m}^w$ and the mean $\tilde{\mu}_{l,m}^w$ of $\tilde{w}_{l,m}(x_m)$ will be provided in \eqref{eq:gamma_w_tilde}-\eqref{eq:mu_w_tilde} in the following subsections.

In \eqref{eq:proj_q_tilde}, 
the marginal target distribution $\hat{p}_{\tilde{q}_l} (x_m |\mathbf{y}) = \int_{\mathbf{x} \backslash x_m} \hat{p}_{\tilde{q}_l} (\mathbf{x} | \mathbf{y})$ is expressed as 
\begin{align}
    \label{eq:p_tg_mg_q_tilde}
    \hat{p}_{\tilde{q}_{l}}(x_m | \mathbf{y})
    \propto \int_{\mathbf{x} \setminus x_m} 
    \underbrace{p(\tilde{\mathbf{y}}_l | \mathbf{x}) \prod_{m^\prime=1}^M \tilde{w}_{l,m^\prime} (x_{m^\prime})}_{\triangleq \tilde{u}_l(\mathbf{x})},
\end{align}
where $\tilde{u}_l(\mathbf{x})$ can be expressed as 
\begin{align*}
    \tilde{u}_l(\mathbf{x}) 
    \propto \exp \left [ - (\mathbf{x} - \tilde{\bm{\mu}}_l^u)^\mathrm{H}  (\tilde{\mathbf{\Sigma}}_l^u)^{-1} (\mathbf{x} - \tilde{\bm{\mu}}_l^u) \right ],
\end{align*}
and the covariance matrix $\tilde{\mathbf{\Sigma}}_{l}^u \in \mathbb{C}^{M \times M}$ and the mean vector $\tilde{\bm{\mu}}_l^u \in \mathbb{C}^{M \times 1}$ are given by 
\begin{subequations}
\label{eq:mu_sigma_u_tilde}
\begin{align}
    \tilde{\mathbf{\Sigma}}_{l}^u &= \left ( \sigma_z^{-1} \tilde{\mathbf{H}}_l^\mathrm{H} \tilde{\mathbf{H}}_l + \mathbf{D} \left ( \tilde{\boldsymbol{\gamma}}_{l}^w \right ) \right )^{-1}, \\
    \tilde{\bm{\mu}}_l^u &= \tilde{\mathbf{\Sigma}}_{l}^u \left ( \sigma_z^{-1} \tilde{\mathbf{H}}_l^\mathrm{H} \tilde{\mathbf{y}}_l + \tilde{\boldsymbol{\gamma}}_{l}^w \odot \tilde{\boldsymbol{\mu}}_l^w \right ).
\end{align}
\end{subequations}

Using the property of marginal Gaussian distributions~\cite{2006Bishop_PRML}, the marginal target distribution in \eqref{eq:p_tg_mg_q_tilde} is expressed as 
\begin{align}
    \label{eq:p_tg_mg_q_tilde_2}
    \hat{p}_{\tilde{q}_{l}}(x_m | \mathbf{y}) \propto \exp \left ( - \tilde{\gamma}_{l,m}^u |x_m - \tilde{\mu}_{l,m}^u |^2 \right ),
\end{align}
with
$\tilde{\mu}_{l,m}^u = [\tilde{\bm{\mu}}_l^u]_m$ and 
$\tilde{\gamma}_{l,m}^u = 1 / [\tilde{\mathbf{\Sigma}}_l^u]_{m,m}$.

Substituting \eqref{eq:g_mg_q_tilde} and \eqref{eq:p_tg_mg_q_tilde_2} into \eqref{eq:proj_q_tilde}, the approximate factor $\tilde{q}_{l,m} (x_m)$ can be updated as 
\begin{align*}
    \tilde{q}_{l,m} (x_m| \mathbf{y}) \propto \hat{p}_{\tilde{q}_l}(x_m| \mathbf{y}) \big / \tilde{w}_{l,m}(x_m) , 
\end{align*}
where the precision and the mean of $\tilde{q}_{l,m} (x_m)$ are given by 
\begin{subequations}    
\begin{align}
    \tilde{\gamma}_{l,m}^q & = \tilde{\gamma}_{l,m}^u - \tilde{\gamma}_{l,m}^w, \\
    \tilde{\mu}_{l,m}^q &= \left ( \tilde{\gamma}_{l,m}^u  \tilde{\mu}_{l,m}^u - \tilde{\gamma}_{l,m}^w \tilde{\mu}_{l,m}^w \right ) / \tilde{\gamma}_{l,m}^q.
\end{align}
\end{subequations}

\subsubsection{Update $\bm{\pi}_{m}^r$ for $r_m(x_m)$}
\label{subsec:update_r}
The parameter $\tilde{\bm{\pi}}_{m}^r$ is updated while the other parameters $\bm{\Pi} \setminus \tilde{\bm{\pi}}_{m}^r$ are fixed as tentative estimates.
The minimization of the local divergence for $\tilde{\bm{\pi}}_{m}^r$ is then formulated as
\begin{align*}
    \mathop{\mathrm{minimize}}_{\tilde{\bm{\pi}}_{m}^r} \ 
    \mathrm{KL} \left( \hat{p}_{r_m} (\mathbf{x}|\mathbf{y}) || g(\mathbf{x}|\mathbf{y}) \right) ,
\end{align*}
where $\hat{p}_{r_m} (\mathbf{x}|\mathbf{y})$ is the target distribution, given by
\begin{align}
    \label{eq:p_tg_r}
    \hat{p}_{r_m} (\mathbf{x}|\mathbf{y})
    &\propto p(\mathbf{x}) g_{\backslash r_m} (\mathbf{x}),
\end{align}
with the cavity distribution 
$g_{\backslash r_m} (\mathbf{x}) \triangleq g(\mathbf{x} |\mathbf{y}) / r_m (x_m)$.

Through the same procedure as that used for deriving $\bm{\pi}_{l,m}^q$, the optimal condition of $\tilde{\bm{\pi}}_{m}^r$ is given by 
\begin{align}
    \label{eq:proj_r}
    g(x_m|\mathbf{y}) \propto \mathrm{proj}_{\mathbf{\Phi}} \left [ \hat{p}_{r_m}(x_m | \mathbf{y}) \right ],
\end{align}
where the marginal target distribution $\hat{p}_{r_{m}}(x_m | \mathbf{y})$ in \eqref{eq:proj_r} can be expressed as
\begin{align}
    \label{eq:p_tg_r_2}
    \hat{p}_{r_{m}}(x_m | \mathbf{y}) &\propto p(x_m) \underbrace{\prod_{l^\prime = 1}^L \frac{q_{l^\prime, m}(x_m)}{\tilde{q}_{l^\prime, m}(x_m)}}_{\triangleq \bar{q}_{m}(x_m) }.
\end{align}

Since $q_{l,m}(x_m)$ and $\tilde{q}_{l,m}(x_m)$ are Gaussian distributions, their product $\bar{q}_{m}(x_m)$ is also a Gaussian distribution, which can be expressed as 
\begin{align*}
    \bar{q}_m(x_m) \propto \exp \left ( - \bar{\gamma}_m^q | x_m - \bar{\mu}_m^q |^2 \right ),
\end{align*}
where the precision $\bar{\gamma}^q_m$ and the mean $\bar{\mu}^q_m$ are given by 
\begin{subequations}
\label{eq:combine}
\begin{align}
    \bar{\gamma}_m^q &= \sum_{l=1}^L \left ( \gamma_{l,m}^q - \tilde{\gamma}_{l,m}^q \right ), \\
    \bar{\mu}_m^q &= \frac{1}{\bar{\gamma}_m^q}
    \sum_{l=1}^L \left ( \gamma_{l,m}^q \mu_{l,m}^q - \tilde{\gamma}_{l,m}^q \tilde{\mu}_{l,m}^q \right ).
\end{align}
\end{subequations}

Substituting \eqref{eq:p_tg_r_2} into \eqref{eq:proj_r}, the marginal approximate posterior $g(x_m|\mathbf{y})$ can be expressed as
\begin{align}
    \label{eq:g_mg}
    g(x_m|\mathbf{y}) \propto \mathrm{proj}_\mathbf{\Phi} \left [ p(x_m) \bar{q}_m (x_m) \right ] 
    \!=\! \mathcal{CN} (x_m | \hat{x}_m, \hat{v}_m).
\end{align}

Using the MMSE denoiser function~\cite{2019Rangan_VAMP}
$\eta_\mathrm{e}(\bar{\mu}^q_m, \bar{\gamma}^q_m) \triangleq \mathbb{E}_{p(x_m) \bar{q}_m(x_m)} [x_m]$,
the approximate posterior mean and variance are obtained as 
$\hat{x}_m = \eta_\mathrm{e}(\bar{\mu}^q_m, \bar{\gamma}^q_m)$ and 
$\hat{v}_m = \eta_\mathrm{v} (\bar{\mu}^q_m, \bar{\gamma}^q_m) \triangleq \bar{\gamma}^q_m \frac{\partial \eta_\mathrm{e}(\bar{\mu}^q_m, \bar{\gamma}^q_m) }{\partial \bar{\mu}^q_m}$.
Substituting the prior $p(x_m)$ based on \ac{QAM} constellation into \eqref{eq:g_mg}, the approximate posterior mean $\hat{x}_m$ and variance $\hat{v}_m$ can be explicitly calculated as
\begin{subequations}
\label{eq:mu_sigma_denoiser}
\begin{align}
    \hat{x}_m &= \frac{1}{Z_m} \sum_{\mathrm{x} \in \mathcal{X}} \mathrm{x} \exp \left ( - \bar{\gamma}_{m}^q |\mathrm{x} - \bar{\mu}^q_m|^2 \right ), \\
    \hat{v}_m &= \frac{1}{Z_m} \sum_{\mathrm{x} \in \mathcal{X}} |\mathrm{x}|^2 \exp \left ( - \bar{\gamma}_{m}^q |\mathrm{x} - \bar{\mu}^q_m|^2 \right ) - |\hat{x}_m|^2, 
\end{align}
\end{subequations}
where $Z_m$ is the normalizing constant given by 
$Z_m = \sum_{\mathrm{x} \in \mathcal{X}} \exp \left ( - \bar{\gamma}_{m}^q |\mathrm{x} - \bar{x}^q_m|^2 \right )$.
For notational convenience, the precision of the denoiser output is defined as $\hat{\gamma}_m \triangleq \hat{v}_m^{-1}$.

Since the marginal approximate posterior $g(x_m | \mathbf{y})$ is expressed as 
\begin{align*}
    % \label{eq:g_mg_r}
    g(x_m | \mathbf{y}) &\propto r_m(x_m) \bar{q}_m(x_m), 
\end{align*}
the approximate factor $r_m(x_m)$ can be updated, from \eqref{eq:g_mg}, as
% 
% rの更新式
\begin{align}
    \label{eq:r_update}
    r_{m} (x_m) \propto \mathcal{CN}(x_m| \hat{x}_m, \hat{\gamma}_m^{-1}) \big / \bar{q}_m(x_m),
\end{align}
where 
the precision and the mean of $r_{m} (x_m)$ are given by 
\begin{subequations}    
\begin{align}
    \gamma_m^r &= \hat{\gamma}_m - \bar{\gamma}_m^q, \\
    \mu_m^r &= \left ( \hat{\gamma}_m \hat{x}_m - \bar{\gamma}_m^q \bar{\mu}_m^q \right ) / \gamma_m^r .
\end{align}
\end{subequations}
 
% wの更新式
Substituting \eqref{eq:r_update} into \eqref{eq:w_update}, $w_{l,m}(x_m)$ can be updated as
\begin{align*}
    w_{l,m}(x_m) \propto \mathcal{CN}(x_m| \hat{x}_m, \hat{\gamma}_m^{-1}) \big / q_{l,m}(x_m),
\end{align*}
where the precision and the mean of $w_{l,m}(x_m)$ are given by 
\begin{subequations} 
\label{eq:mu_gamma_w}
\begin{align}
    \label{eq:gamma_w}
    \gamma_{l,m}^w &= \hat{\gamma}_m - \gamma^q_{l,m}, \\
    \label{eq:mu_w}
    \mu_{l,m}^w &= \left ( \hat{\gamma}_m \hat{x}_m - \gamma_{l,m}^q \mu_{l,m}^q \right ) / \gamma_{l,m}^w.
\end{align}
\end{subequations}

Similarly, substituting \eqref{eq:r_update} into \eqref{eq:w_tilde_update}, $\tilde{w}_{l,m} (x_m)$ can be updated as 
\begin{align*}
    \tilde{w}_{l,m}(x_m) \propto \mathcal{CN}(x_m| \hat{x}_m, \hat{\gamma}_m^{-1}) \big / \tilde{q}_{l,m}(x_m),
\end{align*}
where the precision and the mean of $\tilde{w}_{l,m}(x_m)$ are given by
\begin{subequations}   
\label{eq:mu_gamma_w_tilde} 
\begin{align}
    \label{eq:gamma_w_tilde}
    \tilde{\gamma}_{l,m}^w &= \hat{\gamma}_m - \tilde{\gamma}^q_{l,m}, \\
    \label{eq:mu_w_tilde}
    \tilde{\mu}_{l,m}^w &= \left ( \hat{\gamma}_m \hat{x}_m - \tilde{\gamma}_{l,m}^q \tilde{\mu}_{l,m}^q \right ) / \tilde{\gamma}_{l,m}^w.
\end{align}
\end{subequations}

\subsection{Algorithm Description}
\label{subsec:Algorithm_Description}

The pseudocode of the proposed algorithm, \textit{OvEP}, is summarized in Algorithm~\ref{alg:OvEP}.
In the proposed algorithm, LMMSE-based filters are designed for each block part and overlapping part in \eqref{eq:mu_sigma_u} and \eqref{eq:mu_sigma_u_tilde} (lines 3–12 of Algorithm~\ref{alg:OvEP}).
Subsequently, the LMMSE outputs from each block part $\bm{\mu}_l^q$ and overlapping part $\tilde{\bm{\mu}}_l^q$ are combined through weighted averaging based on their precisions $\bm{\gamma}_l^q$ and $\tilde{\bm{\gamma}}_l^q$ in \eqref{eq:combine} (lines 13-14).
To account for the overlapping block structure, the overlapping parts are subtracted during the combining process, yielding the denoiser input $\{\bar{\bm{\mu}}^q, \bar{\bm{\gamma}}^q \}$.
By employing the MMSE denoiser functions 
$\eta_\mathrm{e} (\cdot)$ and 
$\eta_\mathrm{v} (\cdot)$, designed based on the prior $p(\mathbf{x})$, the approximate posterior mean $\hat{\mathbf{x}}$ and variance $\hat{\mathbf{v}}$ are computed from the denoiser input $\{\bar{\bm{\mu}}^q, \bar{\bm{\gamma}}^q \}$.
Finally, using the denoiser output $\{ \hat{\mathbf{x}}, \hat{\mathbf{v}} \}$, the extrinsic values for block parts $\{ \bm{\mu}_l^w, \bm{\gamma}_l^w \}$ and overlapping parts $\{ \tilde{\bm{\mu}}_l^w, \tilde{\bm{\gamma}}_l^w \}$ are computed based on moment-matching rules in \eqref{eq:mu_gamma_w} and \eqref{eq:mu_gamma_w_tilde} (lines 18-21).
This process is repeated until the number of iterations reaches the predefined maximum number $T$.

The proposed method represents a generalized framework that encompasses conventional algorithms as special cases.
By setting the parameters $N_\mathrm{b}$, $N_\mathrm{s}$, and $L$ to specific values, the proposed algorithm reduces to the following well-known algorithms:
\begin{itemize}
    \item \textbf{\textit{LMMSE-EP}}~\cite{2014Cespedes_EP_MIMO} with $N_\mathrm{b}=N_\mathrm{s}=N,\ L=1$:
    This corresponds to the EP algorithm without block partitioning.
    Since full-dimensional LMMSE-based filtering with matrix inversion is required at each iteration, the algorithm incurs the highest computational complexity.
    \item \textbf{\textit{MF-EP}}~\cite{2015Meng_MFEP} with $N_\mathrm{b}=N_\mathrm{s}=1,\ L=N$:
    This corresponds to the EP algorithm that partitions the measurements into scalar values.
    As it does not require matrix inversion, this algorithm has the lowest computational complexity.
    \item \textbf{\textit{NOvEP}}~\cite{2020Wang_EP_subarray} with $N_\mathrm{b}=N_\mathrm{s} \leq N,\ L=N/N_\mathrm{b}$:
    This corresponds to the EP algorithm that partitions the measurements into non-overlapping block parts.
\end{itemize}

In EP-based algorithms, the update of the precision parameters $\bar{\bm{\gamma}}^q$, $\bm{\gamma}_l^w$, and $\tilde{\bm{\gamma}}_l^w$ in lines 13, 18, and 20 of Algorithm~\ref{alg:OvEP} may yield negative values.
Such cases indicate that an optimal parameter pair of mean and precision does not exist within the range of positive precisions.
Following~\cite{2001Minka_EP_thesis,2014Cespedes_EP_MIMO}, the corresponding mean and precision parameters are then left unchanged, retaining their previous values.

%%%%%%%%%
% Alg.
% alg. code
\begin{algorithm}[t]
    \caption[]{Overlapping block partitioning EP (OvEP)}
    \label{alg:OvEP}
    \hrulefill
    \begin{algorithmic}[1]
        \vspace{-0.5ex}
        \Statex \textbf{Input:} 
            $\{\mathbf{y}_l, \mathbf{H}_l \}_{l=1}^L,\ \{ \tilde{\mathbf{y}}_l, \tilde{\mathbf{H}}_l \}_{l=1}^{L-1}$
            $\sigma_x,\ \sigma_z,\ \beta_x$
        \Statex \textbf{Output:} 
            $\hat{\mathbf{x}}, \hat{\mathbf{v}}$
        \vspace{-1.5ex}
        \Statex \hspace{-3ex} \hrulefill
        % \hrulefill

        %%%%%%%%%%%%%%%%%%%%%%%%%
        % % Initialization
        %%%%%%%%%%%%%%%%%%%%%%%%%
        \Statex \textbf{Initialization:} 
        \State 
            $\boldsymbol{\gamma}_l^w = \tilde{\boldsymbol{\gamma}}_l^w = \beta_x^{-1} \mathbf{1}_{M},\ \boldsymbol{\mu}_l^w = \tilde{\boldsymbol{\mu}}_l^w = \mathbf{0}_M$

        \For{$t=1, 2, \ldots, T$} 

        %%%%%%%%%%%%%%%%%%%%%%%%%
        % % LMMSE in the block parts
        %%%%%%%%%%%%%%%%%%%%%%%%%
        \Statex 
            \textbf{\quad // LMMSE filter in the block parts} 
        \State 
            $\mathbf{\Sigma}_{l}^u = \left ( \sigma_z^{-1} \mathbf{H}_l^\mathrm{H} \mathbf{H}_l + \mathbf{D} \left ( \boldsymbol{\gamma}_{l}^w \right)\right )^{-1}$
        \State 
            $\boldsymbol{\mu}_l^u = \mathbf{\Sigma}_{l}^u \left ( \sigma_z^{-1} \mathbf{H}_l^\mathrm{H} \mathbf{y}_l + \boldsymbol{\gamma}_{l}^w \odot \boldsymbol{\mu}_l^w \right )$
        \State 
            $\boldsymbol{\gamma}_l^u = \mathbf{1}_M \oslash \mathbf{d} (\mathbf{\Sigma}_l^u)$
        \State
            $\boldsymbol{\gamma}_l^q = \boldsymbol{\gamma}_l^u - \boldsymbol{\gamma}_l^w$
        \State
            $\boldsymbol{\mu}_l^q = \left ( \boldsymbol{\gamma}_l^u \odot \boldsymbol{\mu}_l^u - \boldsymbol{\gamma}_l^w \odot \boldsymbol{\mu}_l^w \right ) \oslash \boldsymbol{\gamma}_l^q$

        %%%%%%%%%%%%%%%%%%%%%%%%%
        % % LMMSE in the overlapping parts
        %%%%%%%%%%%%%%%%%%%%%%%%%
        \Statex 
            \textbf{\quad // LMMSE filter in the overlapping parts}
        \State 
            $\tilde{\mathbf{\Sigma}}_{l}^u = \left ( \sigma_z^{-1} \tilde{\mathbf{H}}_l^\mathrm{H} \tilde{\mathbf{H}}_l + \mathbf{D} \left ( \tilde{\boldsymbol{\gamma}}_{l}^w \right) \right )^{-1}$
        \State 
            $\tilde{\boldsymbol{\mu}}_l^u = \tilde{\mathbf{\Sigma}}_{l}^u \left ( \sigma_z^{-1} \tilde{\mathbf{H}}_l^\mathrm{H} \tilde{\mathbf{y}}_l + \tilde{\boldsymbol{\gamma}}_{l}^w \odot \tilde{\boldsymbol{\mu}}_l^w \right )$
        \State 
            $\tilde{\boldsymbol{\gamma}}_l^u = \mathbf{1}_M \oslash \mathbf{d} (\tilde{\mathbf{\Sigma}}_l^u)$
        \State
            $\tilde{\boldsymbol{\gamma}}_l^q = \tilde{\boldsymbol{\gamma}}_l^u - \tilde{\boldsymbol{\gamma}}_l^w$, \hfill
            $(\tilde{\bm{\mu}}_L^q = \mathbf{0}_M)$
        \State
            $\tilde{\boldsymbol{\mu}}_l^q = \left ( \tilde{\boldsymbol{\gamma}}_l^u \odot \tilde{\boldsymbol{\mu}}_l^u - \tilde{\boldsymbol{\gamma}}_l^w \odot \tilde{\boldsymbol{\mu}}_l^w \right ) \oslash \tilde{\boldsymbol{\gamma}}_l^q$, \hfill
            $(\tilde{\bm{\gamma}}_L^q = \mathbf{0}_M)$

        %%%%%%%%%%%%%%%%%%%%%%%%%
        % % Combining
        %%%%%%%%%%%%%%%%%%%%%%%%%
        \Statex 
            \textbf{\quad // Combining the block and overlapping parts}
        \State 
            $\bar{\boldsymbol{\gamma}}^q = \sum_{l=1}^L \left ( \boldsymbol{\gamma}_l^q - \tilde{\boldsymbol{\gamma}}_l^q \right )$
        \State 
            $\bar{\boldsymbol{\mu}}^q = \sum_{l=1}^L \left ( \boldsymbol{\gamma}_l^q \odot \boldsymbol{\mu}_l^q - \tilde{\boldsymbol{\gamma}}_l^q \odot \tilde{\boldsymbol{\mu}}_l^q \right ) \oslash \bar{\boldsymbol{\gamma}}^q$

        %%%%%%%%%%%%%%%%%%%%%%%%%
        % % Denoiser
        %%%%%%%%%%%%%%%%%%%%%%%%%
        \Statex 
            \textbf{\quad // MMSE denoiser based on prior}
        \State 
            $\mathbf{\hat{x}} = \eta_\mathrm{e} (\bar{\boldsymbol{\mu}}^q, \bar{\boldsymbol{\gamma}}^q)$
        \State 
            $ \hat{\mathbf{v}} =  \eta_\mathrm{v} (\bar{\boldsymbol{\mu}}^q, \bar{\boldsymbol{\gamma}}^q)$
        \State
            $\hat{\boldsymbol{\gamma}} = \mathbf{1}_M \oslash \hat{\mathbf{v}}$

        %%%%%%%%%%%%%%%%%%%%%%%%%
        % % Replica generation for block parts
        %%%%%%%%%%%%%%%%%%%%%%%%%
        \Statex 
            \textbf{\quad // Extrinsic value generation for block parts}
        \State 
            $\boldsymbol{\gamma}_l^w = \hat{\boldsymbol{\gamma}} - \boldsymbol{\gamma}^q_l$
        \State 
            $\boldsymbol{\mu}_l^w = \left ( \hat{\boldsymbol{\gamma}} \odot \hat{\mathbf{x}} - \boldsymbol{\gamma}_l^q \odot \boldsymbol{\mu}_l^q \right ) \oslash \boldsymbol{\gamma}_l^w$

        %%%%%%%%%%%%%%%%%%%%%%%%%
        % % Replica generation for overlapping parts
        %%%%%%%%%%%%%%%%%%%%%%%%%
        \Statex 
            \textbf{\quad // Extrinsic value generation for overlapping parts}
        \State 
            $\tilde{\boldsymbol{\gamma}}_l^w = \hat{\boldsymbol{\gamma}} - \tilde{\boldsymbol{\gamma}}^q_l$
        \State 
            $\tilde{\boldsymbol{\mu}}_l^w = \left ( \hat{\boldsymbol{\gamma}} \odot \hat{\mathbf{x}} - \tilde{\boldsymbol{\gamma}}_l^q \odot \tilde{\boldsymbol{\mu}}_l^q \right ) \oslash \tilde{\boldsymbol{\gamma}}_l^w$
        \EndFor
    \end{algorithmic}
\end{algorithm}
%%%%%%%%%

\subsection{Analysis of the Denoiser Input}
\label{subsec:denoiser_in}

One of the essential steps in the proposed algorithm is the subtraction of the overlapping part when calculating the denoiser input $\{ \bar{\bm{\mu}}^q, \bar{\bm{\gamma}}^q\}$ in lines 13-14 of Algorithm~\ref{alg:OvEP}.
This subsection therefore analyzes the denoiser input of the proposed algorithm.

For notational convenience, we define the LMMSE output obtained after the subtraction of the $l$-th overlapping part from the $l$-th block part as 
\begin{align}
    \label{eq:mu_bar_q}
    \bar{\bm{\mu}}_l^q \triangleq 
    \underbrace{(\bm{\gamma}_l^q \oslash \bar{\bm{\gamma}}^q )\odot \bm{\mu}_l^q}_{\text{Block part}} 
    - \underbrace{(\tilde{\bm{\gamma}}_l^q \oslash \bar{\bm{\gamma}}^q ) \odot \tilde{\bm{\mu}}_l^q}_{\text{Overlapping part}},
\end{align}
with $\tilde{\bm{\mu}}_L^q = \tilde{\bm{\gamma}}_L^q = \mathbf{0}_M$.

From \eqref{eq:mu_bar_q}, the denoiser input in line 14 of Algorithm~\ref{alg:OvEP} is expressed as 
$\bar{\bm{\mu}}^q = \sum_{l=1}^L \bar{\bm{\mu}}_l^q$.
Based on this expression, the MSE of the denoiser input can be expressed as
% MSE
\begin{align}
    \label{eq:MSE_in}
    &\mathrm{MSE} (\bar{\bm{\mu}}^q)
    \triangleq \mathbb{E}_{p(\mathbf{H}, \mathbf{x}, \mathbf{z})} \left [ \| \bar{\boldsymbol{\mu}}^q - \mathbf{x} \|_2^2 \right ] \nonumber \\
    &= \sum_{l=1}^L \sum_{l^\prime=1}^L \mathbb{E} \left [ 
    \bar{\boldsymbol{\mu}}^{q\mathrm{H}}_l
    \bar{\boldsymbol{\mu}}^{q}_{l^\prime}  \right ]
    -2 \mathfrak{R} \left \{ 
    \sum_{l=1}^L \mathbb{E} \left [ \mathbf{x}^\mathrm{H} \bar{\boldsymbol{\mu}}_l^q \right ] \right \}
    + \mathbb{E} \left [ \| \mathbf{x} \|_2^2 \right ],
\end{align}
where the first term $\sum_{l,l^\prime} \mathbb{E} [ \bar{\boldsymbol{\mu}}^{q\mathrm{H}}_l \bar{\boldsymbol{\mu}}^{q}_{l^\prime}  ]$ represents the sum of the inner products between the LMMSE outputs across all block combinations, which quantifies the inter-block correlation among the LMMSE outputs.
The second term $\sum_{l} \mathbb{E} \left [ \mathbf{x}^\mathrm{H} \bar{\boldsymbol{\mu}}_l^q \right ] $ represents the sum of the inner products between the LMMSE outputs $\bar{\bm{\mu}}_l^q$ and the target vector $\mathbf{x}$, which quantifies the alignment of $\bar{\bm{\mu}}_l^q$ to $\mathbf{x}$.
The third term is the constant term calculated as $\mathbb{E} \left [ \| \mathbf{x} \|_2^2 \right ] = M \sigma_x$.
Hence, the MSE can be reduced by employing an estimator with low inter-block correlation and high alignment with the target vector.
Regarding the MSE of the denoiser input in the proposed algorithm, the following Proposition holds.
\begin{proposition}
    \label{proposition:mse}
    The MSE of the denoiser input defined in \eqref{eq:MSE_in} at the initial iteration $(t=1)$ of Algorithm~\ref{alg:OvEP} can be approximated as
    \begin{align}
        \label{eq:MSE_in_apx}
        \mathrm{MSE} (\bar{\bm{\mu}}^q) \simeq 
        \sum_{l=1}^L \sum_{l^\prime=1}^L \mathbb{E}_{p(\mathbf{H}, \mathbf{x}, \mathbf{z})} \left [ 
        \bar{\bm{\mu}}^{q\mathrm{H}}_l
        \bar{\bm{\mu}}^{q}_{l^\prime}  \right ] - \sigma_x M ,
    \end{align}
    provided that the conditions $\bm{\gamma}_l^w \ll \bm{\gamma}_l^u$ and $\tilde{\bm{\gamma}}_l^w \ll \tilde{\bm{\gamma}}_l^u$ hold.
\end{proposition}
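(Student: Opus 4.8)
The plan is to work directly from the three-term decomposition of $\mathrm{MSE}(\bar{\bm{\mu}}^q)$ in \eqref{eq:MSE_in}. The third term equals $\mathbb{E}[\|\mathbf{x}\|_2^2]=\sigma_x M$ exactly, so the whole content of the proposition reduces to showing that, at $t=1$, the cross-term satisfies $\mathfrak{R}\{\sum_{l}\mathbb{E}[\mathbf{x}^{\mathrm{H}}\bar{\bm{\mu}}_l^q]\}\simeq\sigma_x M$; substituting this into \eqref{eq:MSE_in} gives $-2\sigma_x M+\sigma_x M=-\sigma_x M$, which is precisely \eqref{eq:MSE_in_apx}. Since $\sum_l \mathbf{x}^{\mathrm{H}}\bar{\bm{\mu}}_l^q=\mathbf{x}^{\mathrm{H}}\bar{\bm{\mu}}^q=\sum_m x_m^\ast\bar{\mu}_m^q$, it suffices to prove $\mathbb{E}[x_m^\ast\bar{\mu}_m^q]\simeq\sigma_x$ for each $m$.

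First I would exploit the initialization. At $t=1$ we have $\bm{\mu}_l^w=\tilde{\bm{\mu}}_l^w=\mathbf{0}_M$, so by \eqref{eq:mu_sigma_u} the block filter reduces to $\bm{\mu}_l^u=\sigma_z^{-1}\mathbf{\Sigma}_l^u\mathbf{H}_l^{\mathrm{H}}\mathbf{y}_l$, and the natural-parameter relation gives $\gamma_{l,m}^q\mu_{l,m}^q=\gamma_{l,m}^u\mu_{l,m}^u$ (and likewise for the overlapping part). Inserting $\mathbf{y}_l=\mathbf{H}_l\mathbf{x}+\mathbf{z}_l$ and conditioning on $\mathbf{H}$, the noise contribution drops because $\mathbf{z}$ is zero-mean and independent, while the signal cross-correlation uses $\mathbb{E}[\mathbf{x}\mathbf{x}^{\mathrm{H}}]=\sigma_x\mathbf{I}_M$. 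Using the identity $\sigma_z^{-1}\mathbf{\Sigma}_l^u\mathbf{H}_l^{\mathrm{H}}\mathbf{H}_l=\mathbf{I}_M-\mathbf{\Sigma}_l^u\mathbf{D}(\bm{\gamma}_l^w)$ together with $\gamma_{l,m}^u=1/[\mathbf{\Sigma}_l^u]_{m,m}$, the $m$-th diagonal entry yields $\mathbb{E}[x_m^\ast\mu_{l,m}^u\mid\mathbf{H}]=\sigma_x(1-\gamma_{l,m}^w/\gamma_{l,m}^u)$, hence $\mathbb{E}[x_m^\ast\,\gamma_{l,m}^u\mu_{l,m}^u\mid\mathbf{H}]=\sigma_x(\gamma_{l,m}^u-\gamma_{l,m}^w)=\sigma_x\gamma_{l,m}^q$. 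The identical computation for $\tilde{\mathbf{H}}_l$ gives $\sigma_x\tilde{\gamma}_{l,m}^q$.

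The main obstacle is that $\bar{\mu}_m^q=(\bar{\gamma}_m^q)^{-1}\sum_l(\gamma_{l,m}^q\mu_{l,m}^q-\tilde{\gamma}_{l,m}^q\tilde{\mu}_{l,m}^q)$ is a ratio whose normalizer $\bar{\gamma}_m^q$ is itself a random function of $\mathbf{H}$, so $\mathbb{E}[x_m^\ast\bar{\mu}_m^q]$ does not split into a product of expectations. I would resolve this by the tower property: conditioning on $\mathbf{H}$ first renders $\bar{\gamma}_m^q$ deterministic so it can be pulled out; applying the per-block results above gives $\mathbb{E}[x_m^\ast\sum_l(\gamma_{l,m}^q\mu_{l,m}^q-\tilde{\gamma}_{l,m}^q\tilde{\mu}_{l,m}^q)\mid\mathbf{H}]=\sigma_x\sum_l(\gamma_{l,m}^q-\tilde{\gamma}_{l,m}^q)=\sigma_x\bar{\gamma}_m^q$ by the definition of $\bar{\gamma}_m^q$ in \eqref{eq:combine}; dividing by $\bar{\gamma}_m^q$ then cancels the normalizer, leaving $\mathbb{E}[x_m^\ast\bar{\mu}_m^q\mid\mathbf{H}]=\sigma_x$ and thus $\mathbb{E}[x_m^\ast\bar{\mu}_m^q]=\sigma_x$. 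Summation over $m$ gives the desired $\mathfrak{R}\{\mathbb{E}[\mathbf{x}^{\mathrm{H}}\bar{\bm{\mu}}^q]\}=\sigma_x M$.

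Finally I would identify where the hypotheses enter: $\bm{\gamma}_l^w\ll\bm{\gamma}_l^u$ and $\tilde{\bm{\gamma}}_l^w\ll\tilde{\bm{\gamma}}_l^u$ are what justify treating each per-block LMMSE output as conditionally unbiased, i.e. neglecting the regularization-induced bias factor $\gamma_{l,m}^w/\gamma_{l,m}^u\ll1$ while simultaneously approximating $\gamma_{l,m}^q\approx\gamma_{l,m}^u$, and carrying both approximations produces the stated ``$\simeq$''. I would remark, however, that the exact conditional computation above shows these two approximation errors are mutually compensating, so the cancellation of $\bar{\gamma}_m^q$ is in fact tight; the one genuine caveat needed for the combining formulas in \eqref{eq:combine} to be well defined is that $\bar{\gamma}_m^q>0$, i.e. that the precision clipping in lines 13, 18, and 20 of Algorithm~\ref{alg:OvEP} is not triggered.
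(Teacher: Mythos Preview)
Your proposal is correct and follows essentially the same route as the paper: both reduce the claim to the cross-term identity $\sum_l\mathbb{E}[\mathbf{x}^{\mathrm{H}}\bar{\bm{\mu}}_l^q]\simeq\sigma_x M$, condition on $\mathbf{H}$ first, use the $t=1$ initialization $\bm{\mu}_l^w=\tilde{\bm{\mu}}_l^w=\mathbf{0}_M$, $\bm{\gamma}_l^w=\tilde{\bm{\gamma}}_l^w=\beta_x^{-1}\mathbf{1}_M$, and exploit the LMMSE identity $\sigma_z^{-1}\mathbf{\Sigma}_l^u\mathbf{H}_l^{\mathrm{H}}\mathbf{H}_l=\mathbf{I}_M-\mathbf{\Sigma}_l^u\mathbf{D}(\bm{\gamma}_l^w)$. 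The paper carries the computation in matrix/trace form, splitting into an exact $\sigma_x M$ piece plus a residual trace that it then shows is $\simeq 0$ by invoking $\bm{\gamma}_l^q\simeq\bm{\gamma}_l^u$; you instead work componentwise in the natural parameters $\gamma_{l,m}^q\mu_{l,m}^q=\gamma_{l,m}^u\mu_{l,m}^u$, which makes the cancellation with $\bar{\gamma}_m^q$ exact and shows that the bias factor $\gamma_{l,m}^w/\gamma_{l,m}^u$ and the precision replacement $\gamma_{l,m}^q\to\gamma_{l,m}^u$ compensate perfectly. Your version is therefore slightly sharper than the paper's---it reveals that the hypotheses $\bm{\gamma}_l^w\ll\bm{\gamma}_l^u$, $\tilde{\bm{\gamma}}_l^w\ll\tilde{\bm{\gamma}}_l^u$ are not needed for the cross-term identity itself but only to guarantee $\bar{\gamma}_m^q>0$ so that the combining step in \eqref{eq:combine} is well defined---whereas the paper's organization makes the role of the hypotheses appear more essential than it is.
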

\begin{proof}
    See Appendix~\ref{apx:mse}.
\end{proof}

From Proposition~\ref{proposition:mse}, the MSE of the denoiser input depends only on the inter-block correlation among the LMMSE outputs since $\sigma_x M$ is constant in \eqref{eq:MSE_in_apx}.
Thus, reducing the inter-block correlation is crucial for improving the MSE performance.
When the channel matrix exhibits strong spatial correlation, the LMMSE outputs across blocks also become highly correlated.
Consequently, the conventional \textit{NOvEP} suffers from high inter-block correlation, which degrades the MSE performance.
In contrast, the proposed method partitions the large-scale measurements into overlapping blocks and subtracts the LMMSE outputs in the overlapping parts from those of the corresponding block parts, as defined in \eqref{eq:mu_bar_q}.
This subtraction operation effectively mitigates inter-block correlation.

Furthermore, reducing the inter-block correlation among the LMMSE outputs helps suppress outliers in the denoiser input.
If such outliers occur at the early stages of the iteration process, the denoiser function $\eta_\mathrm{e}(\cdot)$ operates in the \textit{inactive} region~\cite{2024Furudoi_ADD} (also referred to as the hard-valued region~\cite{2019Takahashi_ASB}), which produces hard-decision outputs, resulting in error propagation across iterations.
Therefore, reducing the MSE of the denoiser input at the initial iteration is particularly important for preventing outliers in \acp{MPA}~\cite{2022Tamaki_GAMP_feedback,2024Furudoi_ADD,2019Takahashi_ASB}.
A detailed evaluation for the denoiser input will be provided in Section~\ref{sec:simulation}.

At the initial iteration $(t=1)$, to reflect weak prior knowledge, the prior variance $\beta_x$ should be set to a large value, thereby relying more heavily on the measurements rather than the prior information.
Under this setting, the assumptions $\bm{\gamma}_l^w \ll \bm{\gamma}_l^u$ and $\tilde{\bm{\gamma}}_l^w \ll \tilde{\bm{\gamma}}_l^u$ hold since $\bm{\gamma}_l^w = \tilde{\bm{\gamma}}_l^w = \beta_x^{-1} \mathbf{1}_M$ at $t=1$.

\subsection{Analysis of Fixed Points}

As described in Section~\ref{subsec:alg_derivation}, the proposed algorithm iteratively solves decomposed local divergence minimization via alternating optimization, instead of directly solving the original KL divergence minimization.
To validate the effectiveness of this approach, this section demonstrates that the fixed point of the proposed algorithm corresponds to a stationary point of a relaxed version of the original KL divergence.

We begin by characterizing the fixed point of Algorithm~\ref{alg:OvEP} as follows:
\begin{lemma}
    Any fixed point of Algorithm~\ref{alg:OvEP} satisfies 
    \begin{subequations}
    \label{eq:fixed_point}
    \begin{align}
        \boldsymbol{\gamma}_1^u &= \cdots = \boldsymbol{\gamma}_L^u = \tilde{\boldsymbol{\gamma}}_1^u = \cdots = \tilde{\boldsymbol{\gamma}}_{L-1}^u = \boldsymbol{\hat{\gamma}} = \boldsymbol{\gamma}_\mathrm{f},  \\
        \boldsymbol{\mu}_1^u &= \cdots = \boldsymbol{\mu}_L^u = \tilde{\boldsymbol{\mu}}_1^u = \cdots = \tilde{\boldsymbol{\mu}}_{L-1}^u = \hat{\mathbf{x}} = \mathbf{x}_\mathrm{f},
    \end{align}
    \end{subequations}
    where $\bm{\gamma}_\mathrm{f}$ and $\mathbf{x}_\mathrm{f}$ are defined as
    \begin{subequations}
    \label{eq:fixed_point_2}
    \begin{align}
        \boldsymbol{\gamma}_\mathrm{f} 
        &\triangleq \bar{\boldsymbol{\gamma}}^q + \sum_{l=1}^L \left ( \boldsymbol{\gamma}_l^w - \tilde{\boldsymbol{\gamma}}_l^w \right ), \\
        \mathbf{x}_\mathrm{f} 
        & \! \triangleq \! \left (  \bar{\boldsymbol{\gamma}}^q \odot \bar{\boldsymbol{\mu}}^q + \sum_{l=1}^L \left ( \boldsymbol{\gamma}_l^w \odot \boldsymbol{\mu}_l^w
        - \tilde{\boldsymbol{\gamma}}_l^w \odot \tilde{\boldsymbol{\mu}}_l^w \right )
        \right ) \oslash \boldsymbol{\hat{\gamma}}.
    \end{align}
    \end{subequations}
\end{lemma}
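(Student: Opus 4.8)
The plan is to treat a fixed point as a collection of state variables $\{\bm{\gamma}_l^w,\bm{\mu}_l^w,\tilde{\bm{\gamma}}_l^w,\tilde{\bm{\mu}}_l^w\}$ that reproduce themselves after one full pass through lines~3--21 of Algorithm~\ref{alg:OvEP}, and to exploit the fact that lines~6--7, 11--12, and 18--21 are all linear moment-matching identities of the form ``posterior mass $=$ factor mass $+$ extrinsic mass''. I would settle the precisions first, then the means, and finally verify that the closed-form expressions in \eqref{eq:fixed_point_2} collapse to the common value. Throughout I assume the regime in which every precision update stays positive, so that the ``leave unchanged'' rule for negative precisions stated after the algorithm is never triggered; this is the only place a caveat is needed. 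For the precisions I would combine line~6, $\bm{\gamma}_l^q=\bm{\gamma}_l^u-\bm{\gamma}_l^w$, with the fixed-point form of line~18, $\bm{\gamma}_l^w=\hat{\bm{\gamma}}-\bm{\gamma}_l^q$; adding them eliminates $\bm{\gamma}_l^q$ and gives $\bm{\gamma}_l^u=\hat{\bm{\gamma}}$ for every block $l$. The identical manipulation on the overlapping branch (line~11 with line~20) yields $\tilde{\bm{\gamma}}_l^u=\hat{\bm{\gamma}}$ for $l=1,\dots,L-1$, which already establishes the first chain of equalities in \eqref{eq:fixed_point}.

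For the means I would work with the mass variables $\bm{\gamma}\odot\bm{\mu}$ rather than $\bm{\mu}$ directly, since the updates are linear in them. Rearranging line~7 gives $\bm{\gamma}_l^u\odot\bm{\mu}_l^u=\bm{\gamma}_l^q\odot\bm{\mu}_l^q+\bm{\gamma}_l^w\odot\bm{\mu}_l^w$, while line~19 gives $\bm{\gamma}_l^w\odot\bm{\mu}_l^w=\hat{\bm{\gamma}}\odot\hat{\mathbf{x}}-\bm{\gamma}_l^q\odot\bm{\mu}_l^q$; substituting the latter into the former cancels the $\bm{\gamma}_l^q\odot\bm{\mu}_l^q$ terms and leaves $\bm{\gamma}_l^u\odot\bm{\mu}_l^u=\hat{\bm{\gamma}}\odot\hat{\mathbf{x}}$. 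Since $\bm{\gamma}_l^u=\hat{\bm{\gamma}}$ is elementwise positive (it is the reciprocal of the posterior variance $\hat{\mathbf{v}}$), I may divide to obtain $\bm{\mu}_l^u=\hat{\mathbf{x}}$; lines~12 and~21 give $\tilde{\bm{\mu}}_l^u=\hat{\mathbf{x}}$ in exactly the same way, completing the second chain in \eqref{eq:fixed_point}.

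It remains to check that the definitions in \eqref{eq:fixed_point_2} are merely rewritten forms of $\hat{\bm{\gamma}}$ and $\hat{\mathbf{x}}$. Substituting $\bm{\gamma}_l^w=\hat{\bm{\gamma}}-\bm{\gamma}_l^q$ and $\tilde{\bm{\gamma}}_l^w=\hat{\bm{\gamma}}-\tilde{\bm{\gamma}}_l^q$ into $\sum_{l}(\bm{\gamma}_l^w-\tilde{\bm{\gamma}}_l^w)$, the $\hat{\bm{\gamma}}$ terms leave a single surviving copy (there are $L$ block extrinsics but only $L-1$ overlapping extrinsics), while the $\bm{\gamma}_l^q,\tilde{\bm{\gamma}}_l^q$ terms reassemble, via the combining rule of line~13, into $-\bar{\bm{\gamma}}^q$; hence $\bm{\gamma}_\mathrm{f}=\bar{\bm{\gamma}}^q+(\hat{\bm{\gamma}}-\bar{\bm{\gamma}}^q)=\hat{\bm{\gamma}}$. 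The analogous substitution of lines~19 and~21 into the mass expression, together with line~14, makes every $\bm{\gamma}_l^q\odot\bm{\mu}_l^q$ and $\tilde{\bm{\gamma}}_l^q\odot\tilde{\bm{\mu}}_l^q$ term cancel, leaving $\bm{\gamma}_\mathrm{f}\odot\mathbf{x}_\mathrm{f}=\hat{\bm{\gamma}}\odot\hat{\mathbf{x}}$ and thus $\mathbf{x}_\mathrm{f}=\hat{\mathbf{x}}$, which closes the argument.

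The derivation is essentially bookkeeping, so the only genuine obstacle is the boundary handling at $l=L$: the last block has no overlapping partner, so I must carry the convention $\tilde{\bm{\gamma}}_L^q=\tilde{\bm{\mu}}_L^q=\mathbf{0}_M$ (and treat the absent $\tilde{\bm{\gamma}}_L^w,\tilde{\bm{\mu}}_L^w$ as $\mathbf{0}_M$ rather than as the value line~20 would produce) consistently, so that the sums running to $L$ in \eqref{eq:fixed_point_2} telescope correctly. Getting this off-by-one right is the crux: with the wrong boundary convention the two counts of $\hat{\bm{\gamma}}$ fail to reduce to a single copy and one spuriously obtains $\bm{\gamma}_\mathrm{f}=\mathbf{0}_M$ instead of $\hat{\bm{\gamma}}$.
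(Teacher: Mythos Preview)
Your proposal is correct and follows essentially the same approach as the paper's proof: combine the extrinsic updates (lines 18--21) with the moment-matching identities (lines 6--7, 11--12) to cancel the $\bm{\gamma}_l^q,\bm{\mu}_l^q$ terms and obtain $\bm{\gamma}_l^u=\tilde{\bm{\gamma}}_l^u=\hat{\bm{\gamma}}$, $\bm{\mu}_l^u=\tilde{\bm{\mu}}_l^u=\hat{\mathbf{x}}$, then feed these back through line~13/14 (or equivalently into the definitions \eqref{eq:fixed_point_2}) to identify $\bm{\gamma}_\mathrm{f}=\hat{\bm{\gamma}}$ and $\mathbf{x}_\mathrm{f}=\hat{\mathbf{x}}$. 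Your explicit handling of the $l=L$ boundary convention and the positive-precision caveat are useful additions that the paper's terse proof leaves implicit.
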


\begin{proof}
    Substituting lines 18-19 and 20-21 of Algorithm~\ref{alg:OvEP} into lines 6-7 and 11-12 yields 
    $\bm{\gamma}_l^u = \hat{\bm{\gamma}},\ \bm{\mu}_l^u = \hat{\mathbf{x}}$ and 
    $\tilde{\bm{\gamma}}_l^u = \hat{\bm{\gamma}},\ \tilde{\bm{\mu}}_l^u = \hat{\mathbf{x}}$.
    Then, substituting lines 6, 11 into 13 yields
    $\hat{\bm{\gamma}} = \bm{\gamma}_\mathrm{f}$, while 
    substituting lines 7, 12 into 14 yields $\hat{\mathbf{x}} = \mathbf{x}_\mathrm{f}$.
\end{proof}

This lemma indicates that the outputs of the denoiser $\{\hat{\bm{\gamma}}, \hat{\mathbf{x}} \}$ and those of the LMMSE filters in both the block part $\{\bm{\gamma}_l^u, \bm{\mu}_l^u \}$ and the overlapping part $\{\tilde{\bm{\gamma}}_l^u, \tilde{\bm{\mu}}_l^u \}$ are identical at any fixed point.
In what follows, we show that this fixed point in \eqref{eq:fixed_point_2} corresponds to a stationary point of a relaxed version of the original KL divergence defined in \eqref{eq:KL}.

For notational simplicity, we define the PDFs as 
$f_0(\mathbf{x}) \triangleq p(\mathbf{x})$, 
$f_l(\mathbf{x}) \triangleq p(\mathbf{y}_l | \mathbf{x}),\ \forall l \in \{1,\ldots, L\}$, and
$\tilde{f}_l(\mathbf{x}) \triangleq p(\tilde{\mathbf{y}}_l | \mathbf{x}),\ \forall l \in \{1,\ldots, L-1\}$.
By substituting \eqref{eq:post_div} into \eqref{eq:KL}, the original KL divergence between the true posterior $p(\mathbf{x}|\mathbf{y})$ and the approximate posterior $g(\mathbf{x}|\mathbf{y})$ can be reformulated as 
\begin{align}
    \label{eq:KL_fix}
    &\mathrm{KL} \left( p (\mathbf{x}|\mathbf{y}) || g(\mathbf{x}|\mathbf{y}) \right) \nonumber \\
    &= \sum_{l=0}^{L} \mathrm{KL}(g \| f_l) - \sum_{l=1}^{L-1} \mathrm{KL}(g \| \tilde{f}_l) + \mathrm{H}(g) + \mathrm{const}.
\end{align}

Our final objective is to design the approximate posterior $g(\mathbf{x}|\mathbf{y})$ so as to minimize the KL divergence in \eqref{eq:KL_fix}.
However, directly solving this minimization problem over $g$ is intractable since the KL divergence involves multidimensional integrals with respect to $\mathbf{x}$.
To circumvent this difficulty, we introduce the \textit{auxiliary} PDFs $\{b_l \}_{l=0}^L, \{\tilde{b}_{l^\prime} \}_{l^\prime=1}^{L-1}, q$, which satisfy $b_l(\mathbf{x}) = b_{l^\prime}(\mathbf{x}) = q(\mathbf{x})$, 
$\forall l \in \{0,\ldots, L\}$, $\forall l^\prime \in \{1,\ldots, L-1\}$.
Using these definitions, the KL minimization of \eqref{eq:KL_fix} over $g$ can be reformulated as the equivalent minimization problem:
\begin{subequations}
\label{eq:opt_pdf}
\begin{align}
    \underset{ \{b_l \}_{l=0}^L, \{\tilde{b}_l\}_{l=1}^{L-1},q}{\mathrm{minimize}} \quad 
    & J \left ( \{b_l \}_{l=0}^L, \{\tilde{b}_l \}_{l=1}^{L-1}, q \right ) \\ 
    \mathrm{subject \ to} \quad 
    & b_0(\mathbf{x}) = \cdots = b_L(\mathbf{x}) = q(x) \\ 
    & \tilde{b}_1(\mathbf{x}) = \cdots = \tilde{b}_{L-1}(\mathbf{x}) = q(x),
\end{align}
\end{subequations}
where we define
\begin{align}
    &J \left ( \{b_l \}_{l=0}^L, \{\tilde{b}_l \}_{l=1}^{L-1}, q \right ) \nonumber \\
    &\triangleq \sum_{l=0}^L \mathrm{KL}(b_l \| f_l)  
    - \sum_{l=1}^{L-1} \mathrm{KL}(\tilde{b}_l \| \tilde{f}_l)
    + \mathrm{H}(q).
\end{align}

Since directly solving this optimization problem is intractable, we instead reformulate it as the following optimization problem with relaxed equality constraints as
\begin{subequations}
\label{eq:opt_moment}
\begin{align}
    & \underset{ \{b_l \}_{l=0}^L, \{\tilde{b}_l\}_{l=1}^{L-1},q}{\mathrm{minimize}} \quad 
    J \left ( \{b_l \}_{l=0}^L, \{\tilde{b}_l \}_{l=1}^{L-1}, q \right ) \\ 
    & \quad \mathrm{subject \ to} \quad \nonumber \\ 
    & \quad  \mathbb{E}_{b_l (\mathbf{x})} [\mathbf{x}] = \mathbb{E}_{q(\mathbf{x})} [\mathbf{x}],\ \forall l \in \{0, \ldots L \} \\
    & \quad \mathbb{E}_{\tilde{b}_l (\mathbf{x})} [\mathbf{x}] = \mathbb{E}_{q (\mathbf{x})} [\mathbf{x}],\ \forall l \in \{1, \ldots L-1 \} \\
    & \quad \mathbf{d} \left (  \mathbb{E}_{b_l (\mathbf{x})} [\mathbf{x} \mathbf{x}^\mathrm{H}] \right ) = \mathbf{d} \left ( \mathbb{E}_{q (\mathbf{x})} [\mathbf{x} \mathbf{x}^\mathrm{H} ] \right ), \forall l \in \{0, \ldots L \} \\ 
    & \quad \mathbf{d} \left (  \mathbb{E}_{\tilde{b}_l (\mathbf{x})} [\mathbf{x} \mathbf{x}^\mathrm{H}] \right ) = \mathbf{d} \left ( \mathbb{E}_{q (\mathbf{x})} [\mathbf{x} \mathbf{x}^\mathrm{H} ] \right ), \forall l \in \{1, \ldots L-1 \},
\end{align}
\end{subequations}
where these equality constraints require the first- and second-order moment-matching among the distributions $b_l, \tilde{b}_l, q$, rather than enforcing a perfect match of the distributions.

With respect to the relaxed minimization problem in \eqref{eq:opt_moment} and the fixed point in \eqref{eq:fixed_point}, the following proposition holds.

\begin{proposition}
    \label{proposition:fixed}
    Let $b_0, \{b_l\}_{l=1}^L, \{\tilde{b}_l\}_{l=1}^{L-1}, q$ denote the PDFs parameterized by 
    $\{\bar{\bm{\gamma}}^q, \bar{\bm{\mu}}^q \}$, 
    $\{ \bm{\gamma}^w_l, \bm{\mu}^w_l \}$, 
    $\{ \tilde{\bm{\gamma}}^w_l, \tilde{\bm{\mu}}^w_l \}$, and 
    $\{ \bm{\gamma}_\mathrm{f}, \mathbf{x}_\mathrm{f} \}$
    of the fixed point in \eqref{eq:fixed_point_2}, defined as 
    \begin{subequations}        
    \begin{align}
        \label{eq:b_0}
        b_0(\mathbf{x}) & \propto f_0(\mathbf{x}) \cdot \exp \left [ - \| \mathbf{x} - \bar{\boldsymbol{\mu}}^q \|_{\bar{\boldsymbol{\gamma}}^q}^2 \right ], \\
        \label{eq:b_l}
        b_l(\mathbf{x}) & \propto f_l (\mathbf{x}) \cdot \exp \left [ - \| \mathbf{x} - \boldsymbol{\mu}_l^w \|_{\boldsymbol{\gamma}_l^w}^2 \right ], \\
        \label{eq:b_l_tilde}
        \tilde{b}_l(\mathbf{x}) & \propto \tilde{f}_l (\mathbf{x}) \cdot \exp \left [- \| \mathbf{x} - \tilde{\boldsymbol{\mu}}_l^w \|_{\tilde{\boldsymbol{\gamma}}_l^w}^2 \right ], \\ 
        \label{eq:q}
        q(\mathbf{x}) & \propto \exp \left [ - \| \mathbf{x} - \mathbf{x}_\mathrm{f} \|_{\boldsymbol{\gamma}_\mathrm{f}}^2 \right ].
    \end{align}
    \end{subequations}
    Then, these PDFs correspond to the stationary point of the optimization problem under the moment-matching constraints defined in \eqref{eq:opt_moment}.
    Additionally, the mean and variance of these PDFs satisfy
    \begin{subequations}        
    \begin{align}
        & \mathbb{E}_{b_0(\mathbf{x})} [\mathbf{x}] 
        = \mathbb{E}_{b_l(\mathbf{x})} [\mathbf{x}] 
        = \mathbb{E}_{\tilde{b}_{l}(\mathbf{x})} [\mathbf{x}] 
        = \mathbb{E}_{q(\mathbf{x})} [\mathbf{x}] = \mathbf{x}_\mathrm{f}, \\
        & \mathbf{d} \left ( \mathbb{E}_{b_0(\mathbf{x})} [\mathbf{x} \mathbf{x}^\mathrm{H}] \right )
        = \mathbf{d} \left ( \mathbb{E}_{b_l(\mathbf{x})} [\mathbf{x} \mathbf{x}^\mathrm{H}] \right )
        = \mathbf{d} \left ( \mathbb{E}_{\tilde{b}_{l}(\mathbf{x})} [\mathbf{x} \mathbf{x}^\mathrm{H}] \right ) \nonumber \\
        &= \mathbf{d} \left ( \mathbb{E}_{q(\mathbf{x})} [\mathbf{x} \mathbf{x}^\mathrm{H}] \right ) 
        = \mathbf{1}_M \oslash \bm{\gamma}_\mathrm{f}.
    \end{align}
    \end{subequations}
\end{proposition}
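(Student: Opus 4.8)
The plan is to treat the relaxed problem \eqref{eq:opt_moment} as a constrained variational problem and to verify that the PDFs in \eqref{eq:b_0}--\eqref{eq:q} satisfy its first-order (Lagrange) stationarity conditions together with primal feasibility. First I would form the Lagrangian by adjoining to $J$ each moment-matching equality with its own multiplier: a complex vector $\bm{\lambda}_l$ (resp. $\tilde{\bm{\lambda}}_l$) for the first-moment constraint $\mathbb{E}_{b_l}[\mathbf{x}]=\mathbb{E}_{q}[\mathbf{x}]$ (resp. for $\tilde{b}_l$), a real vector $\bm{\nu}_l$ (resp. $\tilde{\bm{\nu}}_l$) for the diagonal second-moment constraint $\mathbf{d}(\mathbb{E}_{b_l}[\mathbf{x}\mathbf{x}^\mathrm{H}])=\mathbf{d}(\mathbb{E}_{q}[\mathbf{x}\mathbf{x}^\mathrm{H}])$, and a scalar multiplier enforcing normalization of each PDF. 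The first-moment terms are rendered real in the usual way, e.g. via $2\Re\{\bm{\lambda}_l^\mathrm{H}(\mathbb{E}_{b_l}[\mathbf{x}]-\mathbb{E}_{q}[\mathbf{x}])\}$, so that their functional derivatives produce terms linear in $x_m$ and $x_m^*$.

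Next I would compute the functional derivatives $\delta/\delta b_l$, $\delta/\delta\tilde{b}_l$, and $\delta/\delta q$ and set each to zero. Using $\mathrm{KL}(b\|f)=-\mathrm{H}(b)-\mathbb{E}_{b}[\ln f]$ together with $\delta\,\mathrm{KL}(b\|f)/\delta b=\ln b+1-\ln f$, the stationarity condition for $b_l$ forces $\ln b_l=\ln f_l-\sum_m\nu_{l,m}|x_m|^2+(\text{linear terms})+\mathrm{const}$, and completing the square recovers exactly \eqref{eq:b_l} with the multipliers identified as the natural parameters $\bm{\nu}_l=\bm{\gamma}_l^w$ and $\bm{\lambda}_l\leftrightarrow\bm{\gamma}_l^w\odot\bm{\mu}_l^w$; the $l=0$ case reproduces \eqref{eq:b_0} with $\bm{\nu}_0=\bar{\bm{\gamma}}^q$. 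Because the overlapping terms enter $J$ with a minus sign, the stationarity of $\tilde{b}_l$ flips the sign of the KL contribution and forces $\tilde{\bm{\nu}}_l=-\tilde{\bm{\gamma}}_l^w$, giving \eqref{eq:b_l_tilde}. I would stress that the Gaussian factors are \emph{diagonally restricted} (element-wise precisions) precisely because the second-moment constraint is imposed only through $\mathbf{d}(\cdot)$, which is the structural reason the approximate posterior stays in the tractable class $\mathbf{\Phi}$.

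The decisive step is the stationarity of $q$. Since $q$ appears in \emph{every} moment constraint, $\delta J/\delta q=0$ gathers all the multipliers at once: using $\delta\,\mathrm{H}(q)/\delta q=-\ln q-1$ and the sign with which $\mathbb{E}_{q}$ enters each constraint, one obtains $q\propto\exp[-\|\mathbf{x}-\mathbf{x}_\mathrm{f}\|_{\bm{\gamma}_\mathrm{f}}^2]$ with $\bm{\gamma}_\mathrm{f}=\bar{\bm{\gamma}}^q+\sum_{l}(\bm{\gamma}_l^w-\tilde{\bm{\gamma}}_l^w)$ and the precision-weighted mean aggregating analogously to $\mathbf{x}_\mathrm{f}$. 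The block precisions add while the overlapping precisions subtract precisely because $\bm{\nu}_l=\bm{\gamma}_l^w$ but $\tilde{\bm{\nu}}_l=-\tilde{\bm{\gamma}}_l^w$, i.e. the subtraction structure of \eqref{eq:fixed_point_2} is a direct consequence of the minus sign attached to the overlapping KL terms in $J$. I expect this sign bookkeeping between block and overlapping parts to be the \emph{main obstacle}: an error in the orientation of any single constraint would corrupt the subtraction that defines $\bm{\gamma}_\mathrm{f}$ and $\mathbf{x}_\mathrm{f}$, so the aggregation must be carried out carefully to land exactly on \eqref{eq:fixed_point_2}.

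Finally I would establish primal feasibility, which simultaneously proves the ``Additionally'' moment identities. Here the preceding Lemma is essential: the normalized $b_l$ coincides with the Gaussian $u_l$ of \eqref{eq:mu_sigma_u}, whose mean and diagonal precision are $\bm{\mu}_l^u$ and $\bm{\gamma}_l^u$; likewise $\tilde{b}_l$ corresponds to $\tilde{u}_l$ in \eqref{eq:mu_sigma_u_tilde} with $\tilde{\bm{\mu}}_l^u,\tilde{\bm{\gamma}}_l^u$, while $b_0$ is the denoiser target of \eqref{eq:g_mg} with mean $\hat{\mathbf{x}}$ and precision $\hat{\bm{\gamma}}$. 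The fixed-point relations \eqref{eq:fixed_point} state that all of these equal $\mathbf{x}_\mathrm{f}$ and $\bm{\gamma}_\mathrm{f}$, so every distribution $b_0,\{b_l\},\{\tilde{b}_l\},q$ shares the common mean $\mathbf{x}_\mathrm{f}$ and common diagonal covariance $\mathbf{1}_M\oslash\bm{\gamma}_\mathrm{f}$. This verifies the moment-matching constraints of \eqref{eq:opt_moment} and yields the stated first- and second-moment identities, completing the proof that \eqref{eq:b_0}--\eqref{eq:q} form a stationary point of the relaxed problem.
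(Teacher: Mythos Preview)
Your proposal is correct and follows essentially the same route as the paper's proof: form the Lagrangian for \eqref{eq:opt_moment}, take functional derivatives with respect to $b_0,\{b_l\},\{\tilde b_l\},q$ to recover the exponential-family forms \eqref{eq:b_0}--\eqref{eq:q}, identify the multipliers as $\bm{\nu}_0=\bar{\bm{\gamma}}^q$, $\bm{\nu}_l=\bm{\gamma}_l^w$, $\tilde{\bm{\nu}}_l=-\tilde{\bm{\gamma}}_l^w$ (with the sign flip coming from the minus on $\mathrm{KL}(\tilde b_l\|\tilde f_l)$ in $J$), aggregate them in $\delta\mathcal L/\delta q$ to obtain $\bm{\gamma}_\mathrm{f},\mathbf{x}_\mathrm{f}$ exactly as in \eqref{eq:fixed_point_2}, and finally verify primal feasibility by recognizing $b_l,\tilde b_l,b_0$ as $u_l,\tilde u_l$ and the denoiser target together with the fixed-point Lemma \eqref{eq:fixed_point}. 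The paper does precisely this (Appendix~\ref{apx:fixed}), with multipliers denoted $\bm{\alpha}_l,\bm{\beta}_l,\tilde{\bm{\alpha}}_l,\tilde{\bm{\beta}}_l$; your added normalization multipliers are absorbed into the proportionality constants and do not change the argument.
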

\begin{proof}
    See Appendix~\ref{apx:fixed}.
\end{proof}

\subsection{Complexity Analysis}
\label{subsec:complexity}

This section evaluates computational complexity in terms of the number of complex multiplications, measured as \ac{FLOPs}.
The \ac{FLOPs} of the proposed and conventional algorithms are summarized in Table~\ref{table:FLOPs}.
In LMMSE-EP, a full-dimensional matrix multiplication and inversion is required at each algorithmic iteration.
Using the Woodbury formula~\cite{2013Gene_Matrix_Book}, the complexity order of these matrix operations per iteration is $\mathcal{O}( M^3 + M^2 N )$. 
Although this approach delivers superior performance by fully accounting for the correlation among all measurements, it incurs the highest computational complexity.
 
In contrast, both the proposed \textit{OvEP} and the conventional \textit{NOvEP} partition the large-scale measurements into smaller blocks, designing an LMMSE filter for each block.
Accordingly, the computational burden associated with matrix inversion can be alleviated owing to the reduced matrix size and the parallelizability of block-wise processing.
For \textit{OvEP}, the complexity order of the filter computation per iteration is $\mathcal{O}( L (M N_b^2 + N_b^3)
+ (L-1) (M \tilde{N}_\mathrm{b}^2 + \tilde{N}_\mathrm{b}^3 ) )$ since the measurements are partitioned into $L$ block parts of size $N_\mathrm{b}$ and $L-1$ overlapping parts of size $\tilde{N}_\mathrm{b}$.
For \textit{NOvEP}, where no overlap exists in the block partitioning, the complexity order of the filter computation per iteration is $\mathcal{O}(L(M N_\mathrm{b}^2 + N_\mathrm{b}^3 ))$. 
The complexity order of the denoiser function is identical across all methods, given by $\mathcal{O}(MQ)$.

A detailed comparison of the computational complexity under specific parameter settings will be presented in Section~\ref{sec:simulation}.

% Table for complexity
\begin{table}[t!]
    \caption{Computational complexity of the EP algorithms} \label{table:FLOPs}
    \centering
    \begin{tabular*}{8.5cm}{c|l}
        \hline
        \multicolumn{1}{c}{Algorithm} & FLOPs	\\
        \hline
        LMMSE-EP~\cite{2014Cespedes_EP_MIMO}
        & $\quad \mathcal{O}\left( T_\text{lmmse} \left ( M^3 + M^2N + MQ \right ) \right)$ \\
        MF-EP~\cite{2015Meng_MFEP}
        & $\quad \mathcal{O} \left ( T_\text{mf} \left ( MN + MQ \right ) \right )$ \\
        NOvEP~\cite{2020Wang_EP_subarray}
        & $\quad \mathcal{O} \left ( T_\text{nov} \left ( L (M N_b^2 + N_b^3) + MQ \right ) \right )$ \\
        OvEP (Prop.) & 
        \begin{tabular}{l}
            $\ \mathcal{O} \Big ( T_\text{ov} \Big ( L (M N_b^2 + N_b^3) $ \\
            $\quad \quad + (L-1) (M \tilde{N}_\mathrm{b}^2 + \tilde{N}_\mathrm{b}^3 ) + MQ \Big ) \Big )$ 
        \end{tabular} \\
        \hline
    \end{tabular*}
    \vspace{1mm}
    \\{Note: $T_\mathrm{lmmse}, T_\mathrm{mf}, T_\mathrm{nov}, T_\mathrm{ov}$ are the number of iterations}.
    \vspace{-4ex}
\end{table}

\section{Numerical Results}
\label{sec:simulation}

%%%%%%%%%%%%%%%%%%%%%%%%%%%
% Color map 
\begin{figure*}[t]
    \centering
    % 
    %%%%%%%%%%%%%%%%%%%%%%%%%%%
    % NOvEP (rho=0)
    \begin{minipage}{\figsize \columnwidth}
        \centering
        \includegraphics[width=\linewidth]{./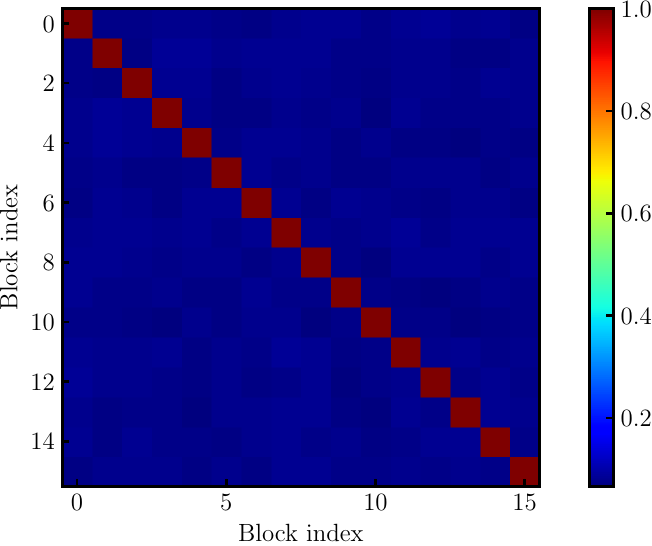}
        \subcaption{NOvEP $(N_\mathrm{s}=2, L=16)$}
        \label{fig:Cmap_NOvEP_r0}
    \end{minipage}
    % 
    % OvEP w/o subtraction (rho=0)
    \begin{minipage}{\figsize \columnwidth}
        \centering
        \includegraphics[width=\linewidth]{./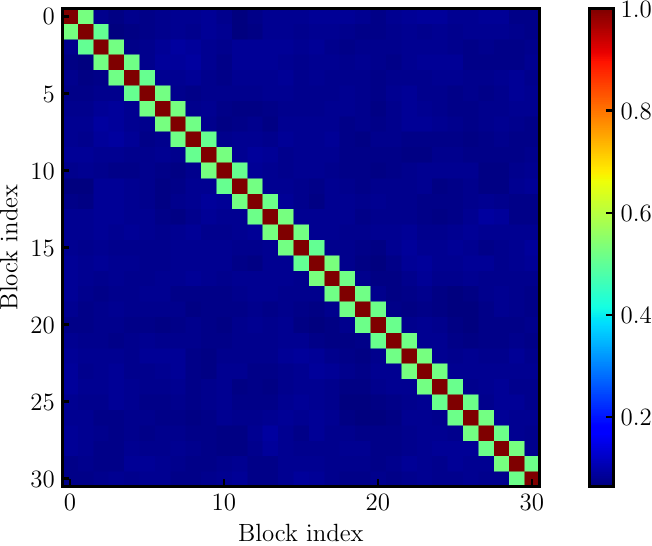}
        \subcaption{OvEP w/o subtraction $(N_\mathrm{s}=1, L=31)$}
        \label{fig:Cmap_OvEP_NonSubtract_r0}
    \end{minipage}
    % 
    % OvEP (rho=0)
    \begin{minipage}{\figsize \columnwidth}
        \centering
        \includegraphics[width=\linewidth]{./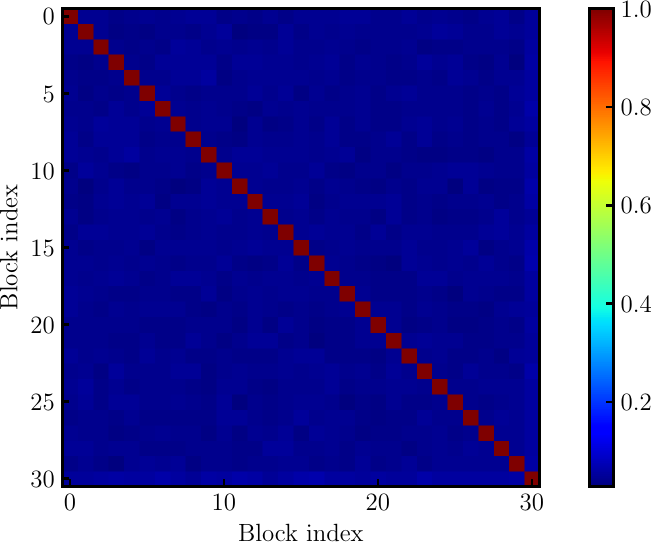}
        \subcaption{OvEP $(N_\mathrm{s}=1, L=31)$}
        \label{fig:Cmap_OvEP_r0}
    \end{minipage}
    \caption{Absolute value of the inter-block correlation matrix $\mathbf{\Phi}$ in uncorrelated channels ($\rho=0$) with $N=32, M=24, Q=4,$ and $N_\mathrm{b}=2$.}
    \label{fig:Cmap_r0}
    \vspace{0.3cm}
    % 
    %%%%%%%%%%%%%%%%%%%%%%%%
    % NOvEP (rho=0.95)
    \begin{minipage}{\figsize \columnwidth}
        \centering
        \includegraphics[width=\linewidth]{./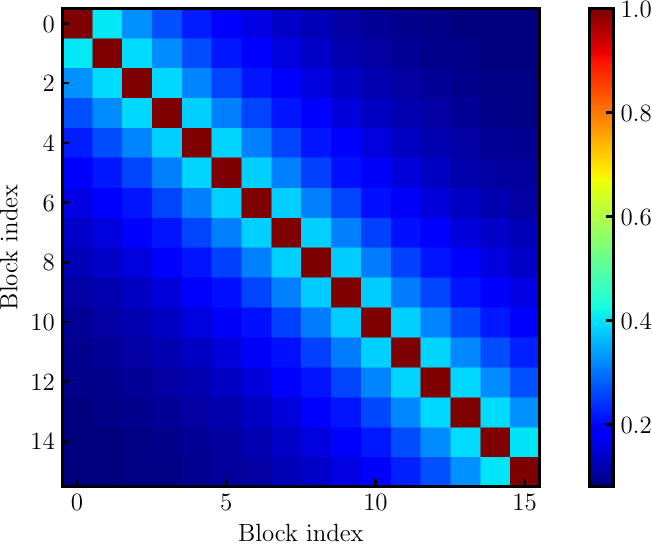}
        \subcaption{NOvEP $(N_\mathrm{s}=2, L=16)$}
        \label{fig:Cmap_NOvEP_r095}
    \end{minipage}
    % 
    % OvEP w/o subtraction (rho=0.95)
    \begin{minipage}{\figsize \columnwidth}
        \centering
        \includegraphics[width=\linewidth]{./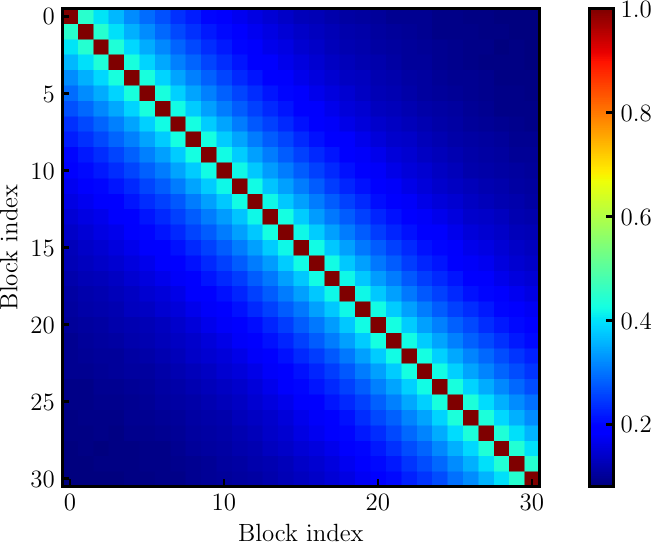}
        \subcaption{OvEP w/o subtraction $(N_\mathrm{s}=1, L=31)$}
        \label{fig:Cmap_OvEP_NonSubtract_r095}
    \end{minipage}
    % 
    % OvEP (rho=0.95)
    \begin{minipage}{\figsize \columnwidth}
        \centering
        \includegraphics[width=\linewidth]{./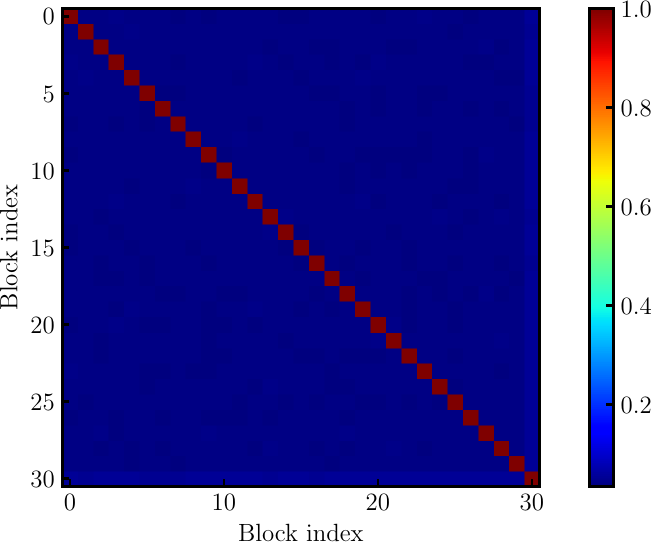}
        \subcaption{OvEP $(N_\mathrm{s}=1, L=31)$}
        \label{fig:Cmap_OvEP_r095}
    \end{minipage}
    %%%%%%%%%%%%%%%%%%%%%%%%
    % 
    \caption{Absolute value of the inter-block correlation matrix $\mathbf{\Phi}$ in highly correlated channels ($\rho=0.95$) with $N=32, M=24, Q=4,$ and $N_\mathrm{b}=2$.}
    \label{fig:Cmap_r095}
    \vspace{-3ex}
\end{figure*}
%%%%%%%%%%%%%%%%%%%%%%%%%%%

%%%%%%%%%%%%%%%%%%%%%%%%%%%
% MSE_vs_SNR 
\begin{figure}[t]
    \centering
    \includegraphics[width=\linewidth]{./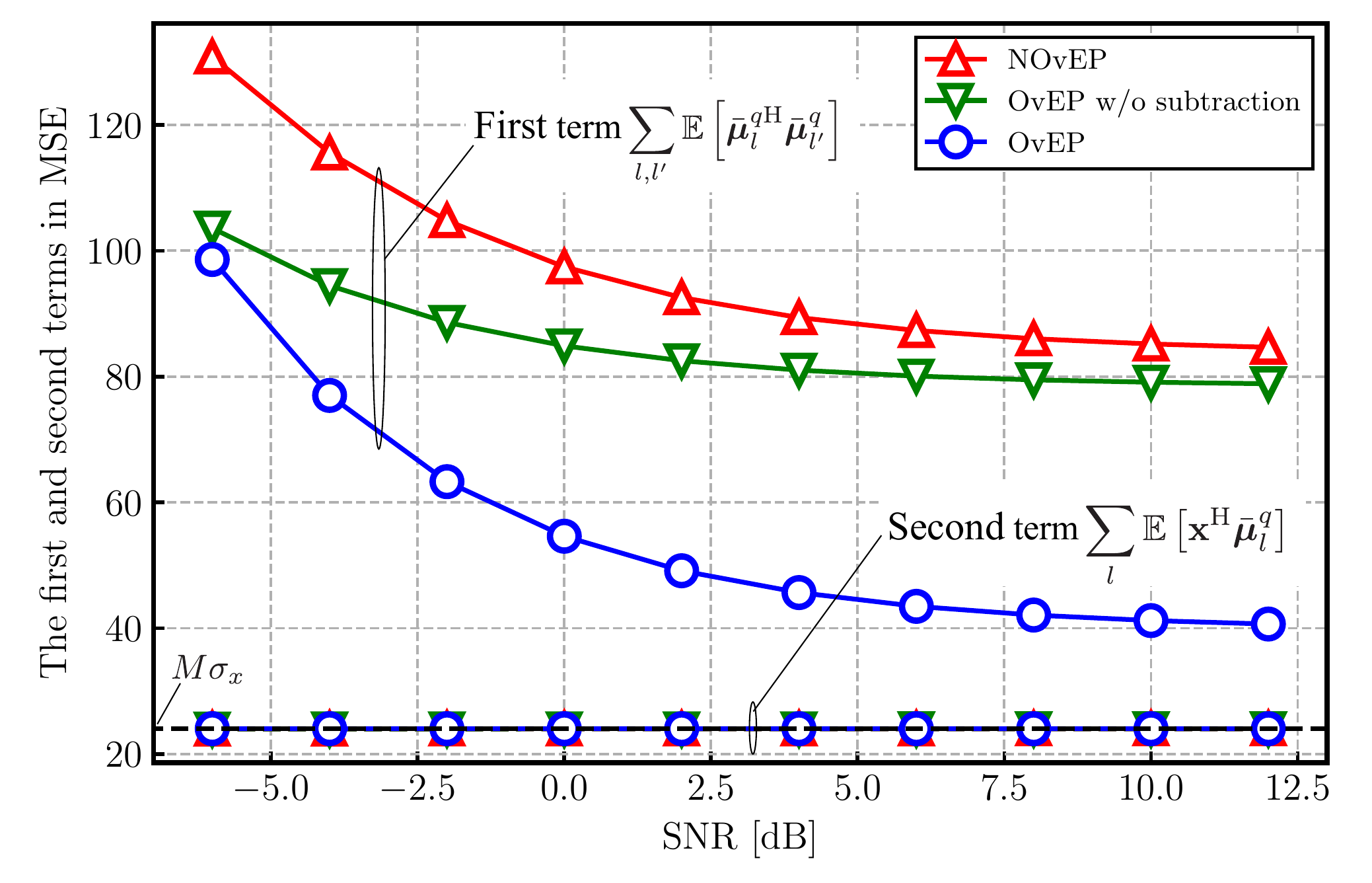}
    \vspace{-4ex}
    \caption{
    The first term $\sum_{l,l^\prime} \mathbb{E} [ \bar{\bm{\mu}}^{q\mathrm{H}}_l \bar{\bm{\mu}}^{q}_{l^\prime}]$ and the second term $\sum_{l} \mathbb{E} \left [ \mathbf{x}^\mathrm{H} \bar{\bm{\mu}}_l^q \right ]$ of the MSE in \eqref{eq:MSE_in} at the first iteration ($t=1)$ under various SNR values with $N=32, M=24, \rho=0.95,$ and $Q=4$.} 
    \label{fig:MSE_vs_SNR}
    \vspace{-3ex}
\end{figure}
%%%%%%%%%%%%%%%%%%%%%%%%%%%

%%%%%%%%%%%%%%%%%%%%%%%%%%%
% Histogram of denoiser input
\begin{figure*}[t]
    \centering
    % NOvEP
    \begin{minipage}{\figsize \columnwidth}
        \centering
        \includegraphics[width=\linewidth]{./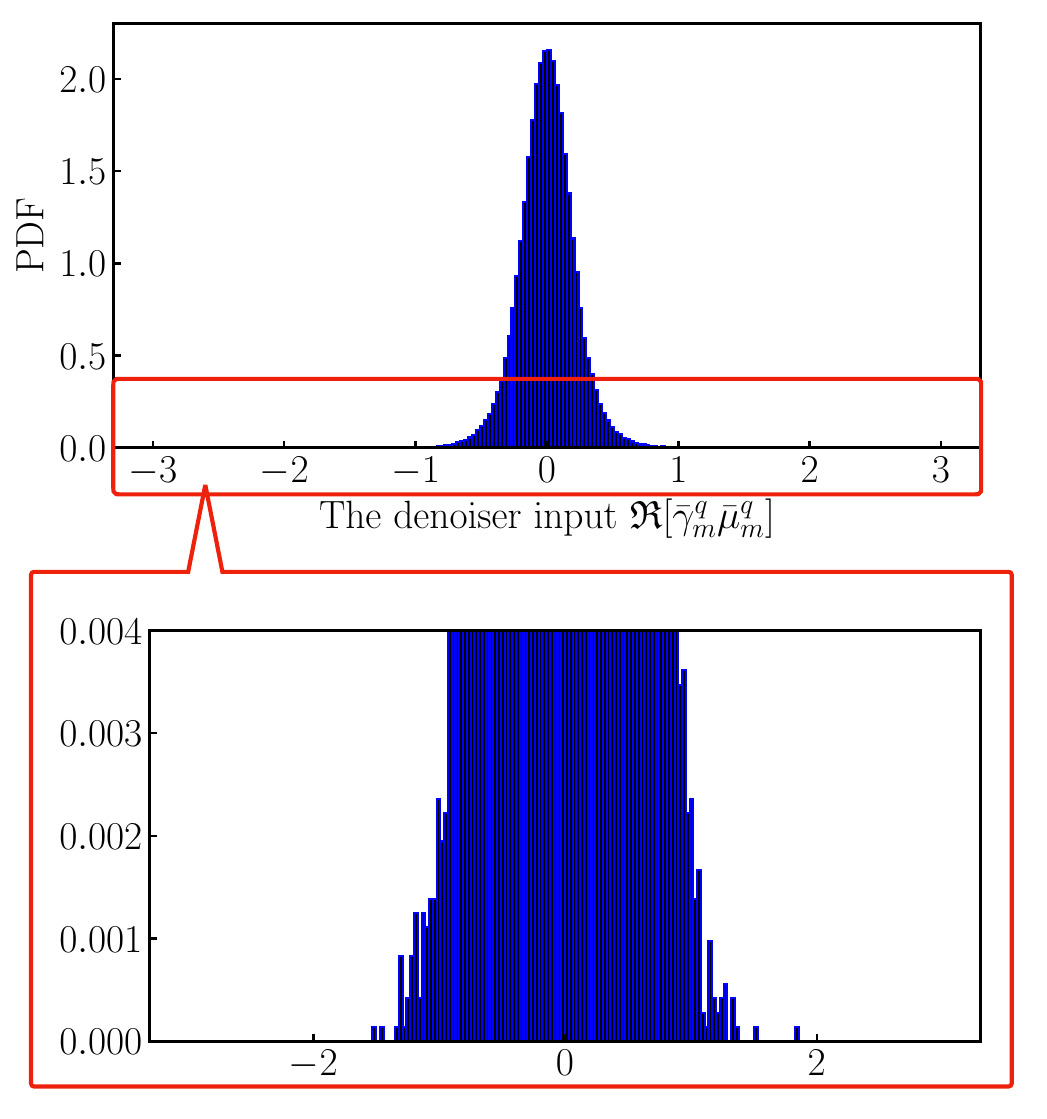}
        \subcaption{NOvEP $(N_\mathrm{s}=2, L=16)$}
        \label{fig:Mean_nrm_In_NOvEP}
    \end{minipage}
    % 
    % OvEP w/o subtraction
    \begin{minipage}{\figsize \columnwidth}
        \centering
        \includegraphics[width=\linewidth]{./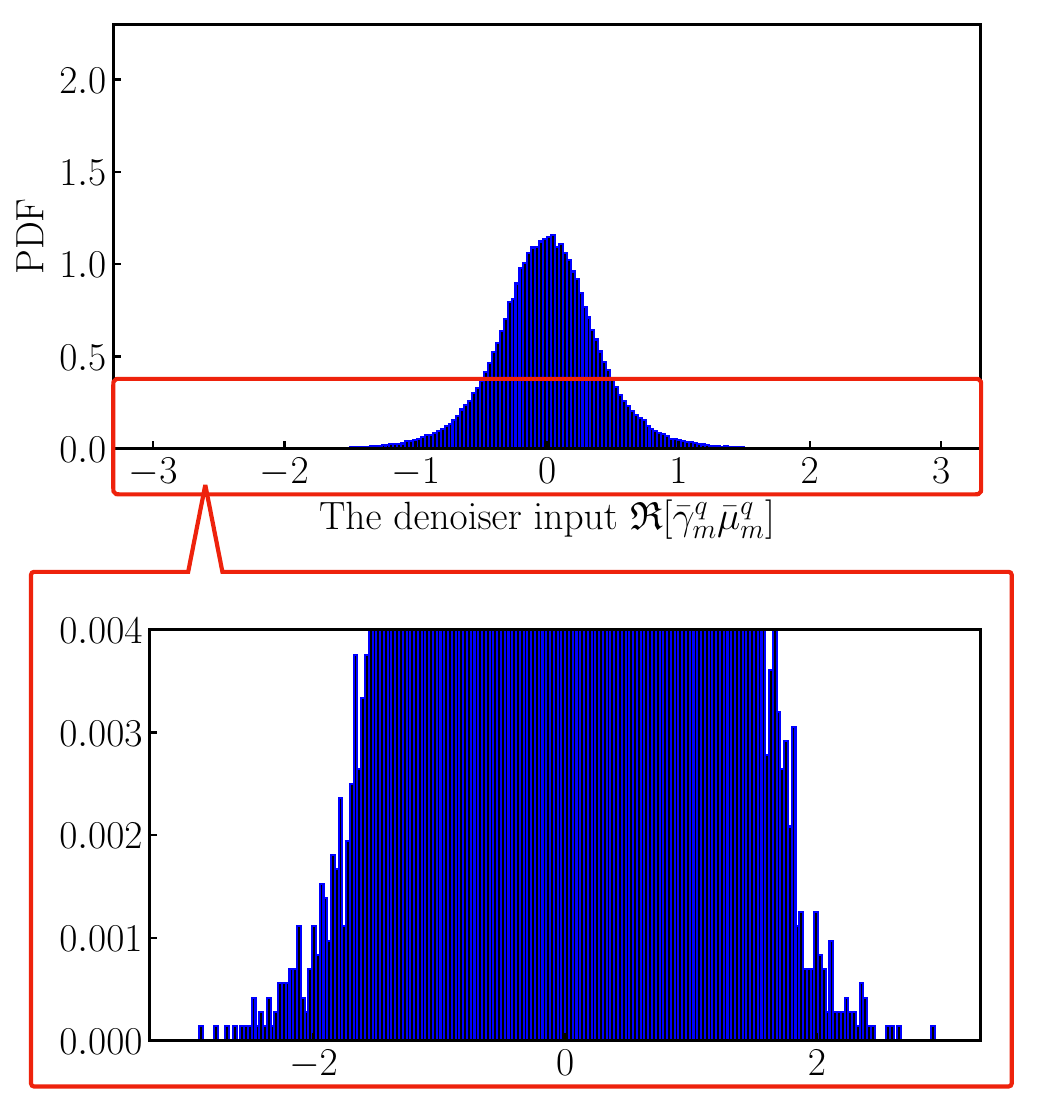}
        \subcaption{OvEP w/o subtraction $(N_\mathrm{s}=1, L=31)$}
        \label{fig:Mean_nrm_In_OvEP_NonSubtract}
    \end{minipage}
    % 
    % OvEP
    \begin{minipage}{\figsize \columnwidth}
        \centering
        \includegraphics[width=\linewidth]{./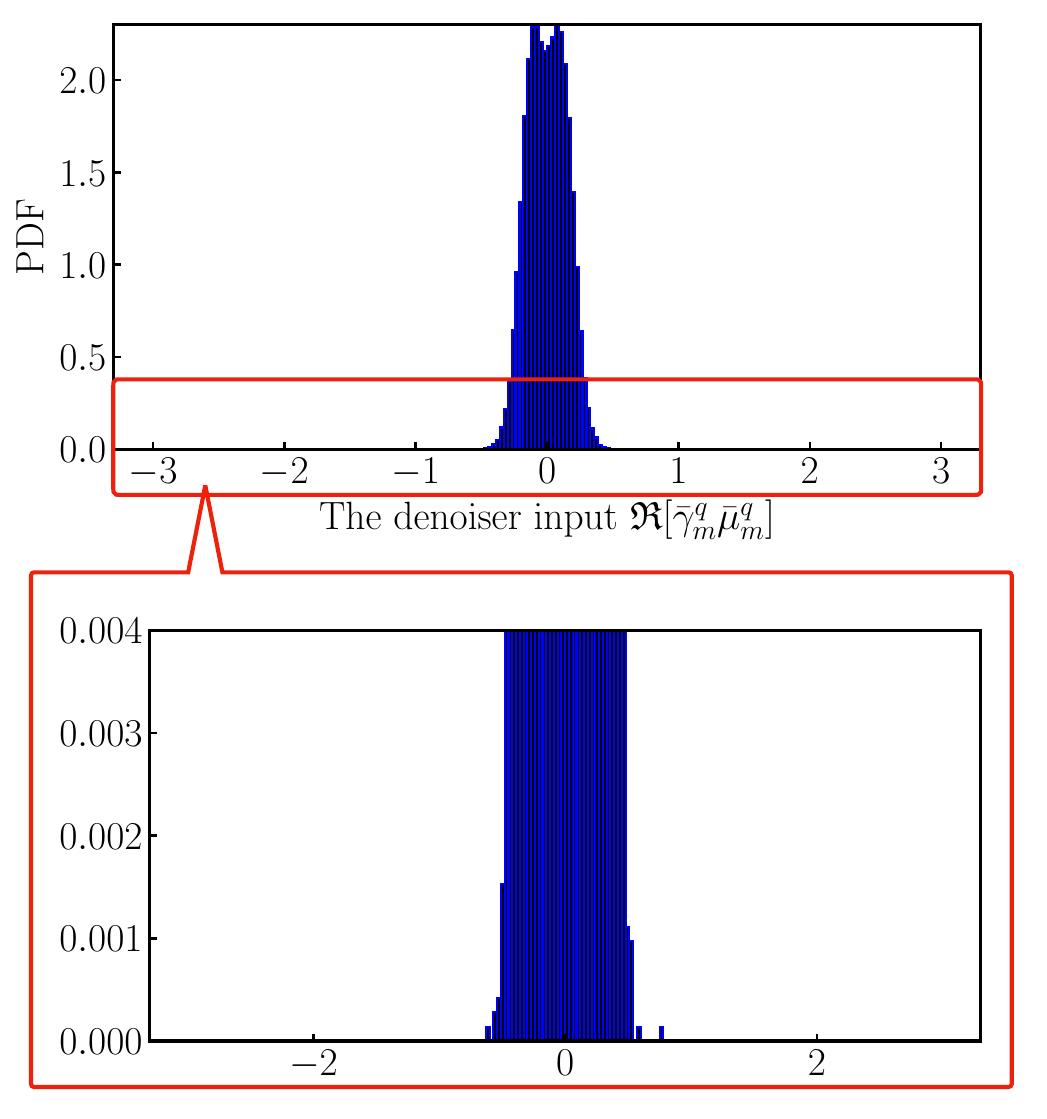}
        \subcaption{OvEP $(N_\mathrm{s}=1, L=31)$}
        \label{fig:Mean_nrm_In_OvEP}
    \end{minipage}
    \caption{Histogram of the denoiser input $\mathfrak{R} [ \bar{\gamma}_m^q \bar{\mu}_m^q ]$ at the initial iteration $(t=1)$ with $Q=4, N_\mathrm{b}=2, \rho=0.95,$ and $\mathrm{SNR}=12\ \mathrm{dB}$.}
    \label{fig:Hist_in}
    \vspace{-1ex}
\end{figure*}
%%%%%%%%%%%%%%%%%%%%%%%%%%%

%%%%%%%%%%%%%%%%%%%%%%%%%%%
% Histogram of denoiser output
\begin{figure*}[t]
    \centering
    % NOvEP
    \begin{minipage}{\figsize \columnwidth}
        \centering
        \includegraphics[width=\linewidth]{./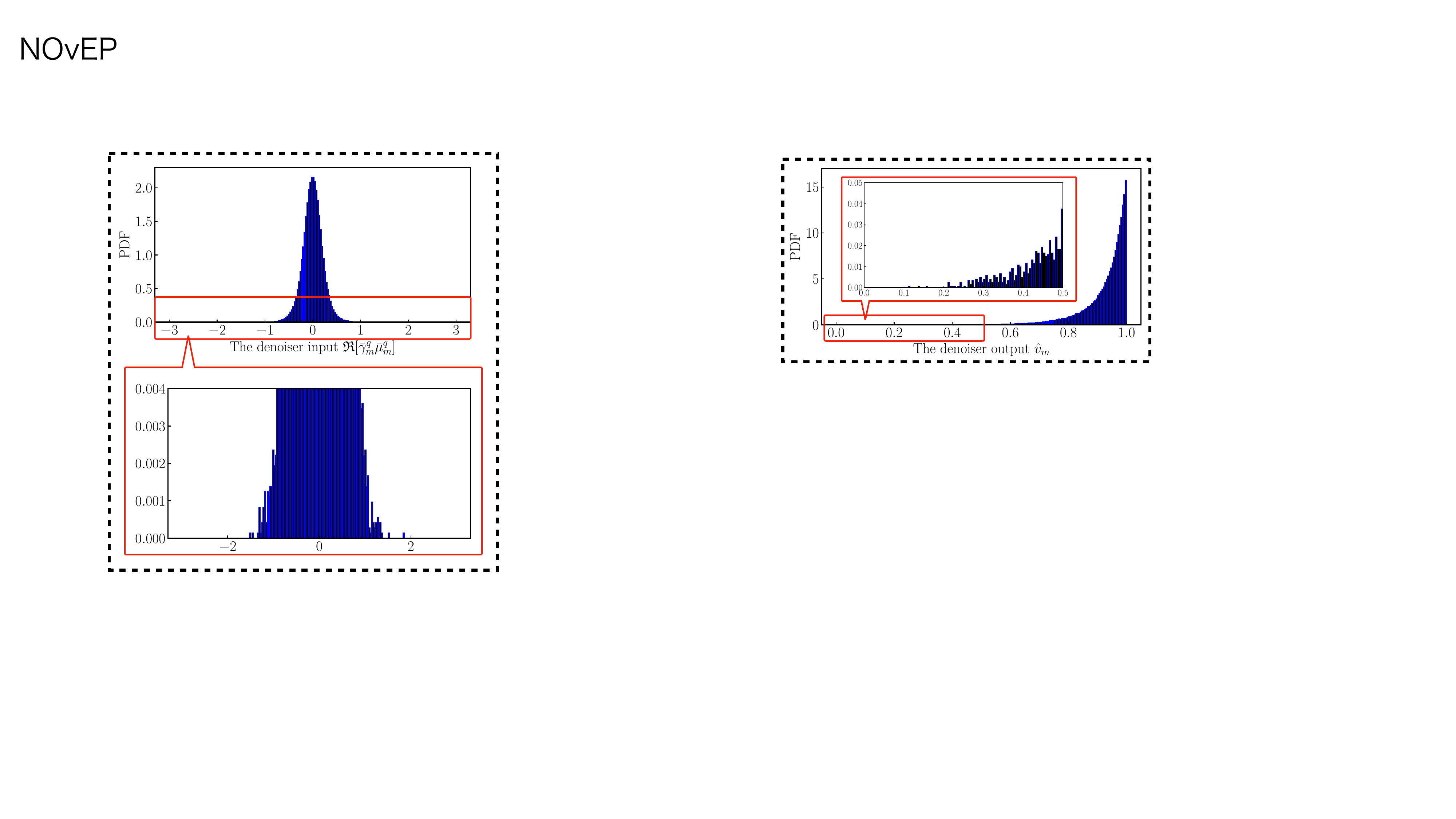}
        \subcaption{NOvEP $(N_\mathrm{s}=2, L=16)$}
        \label{fig:Vari_nrm_In_NOvEP}
    \end{minipage}
    % 
    % OvEP w/o subtraction
    \begin{minipage}{\figsize \columnwidth}
        \centering
        \includegraphics[width=\linewidth]{./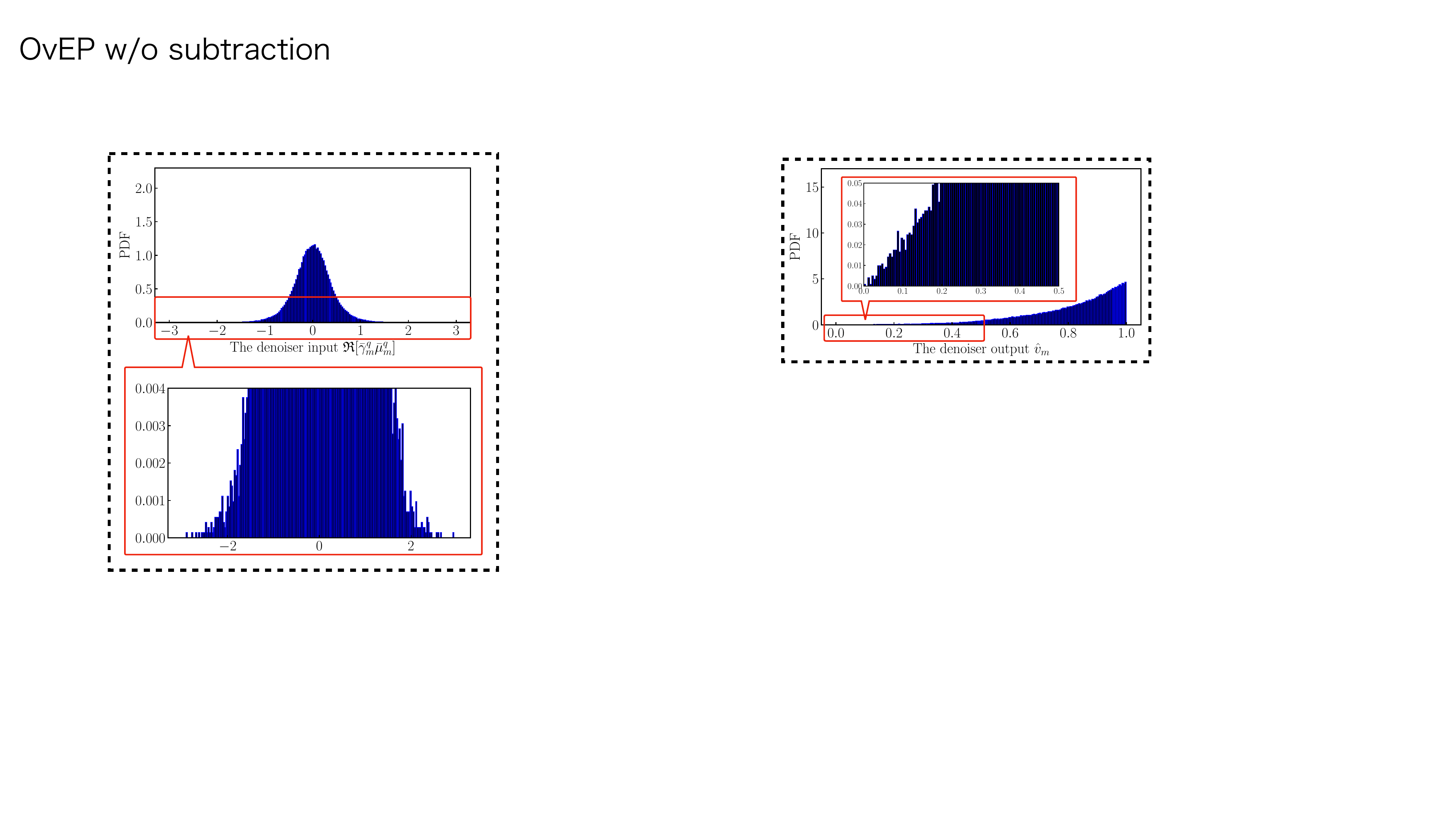}
        \subcaption{OvEP w/o subtraction $(N_\mathrm{s}=1, L=31)$}
        \label{fig:Vari_nrm_In_OvEP_NonSubtract}
    \end{minipage}
    % 
    % OvEP
    \begin{minipage}{\figsize \columnwidth}
        \centering
        \includegraphics[width=\linewidth]{./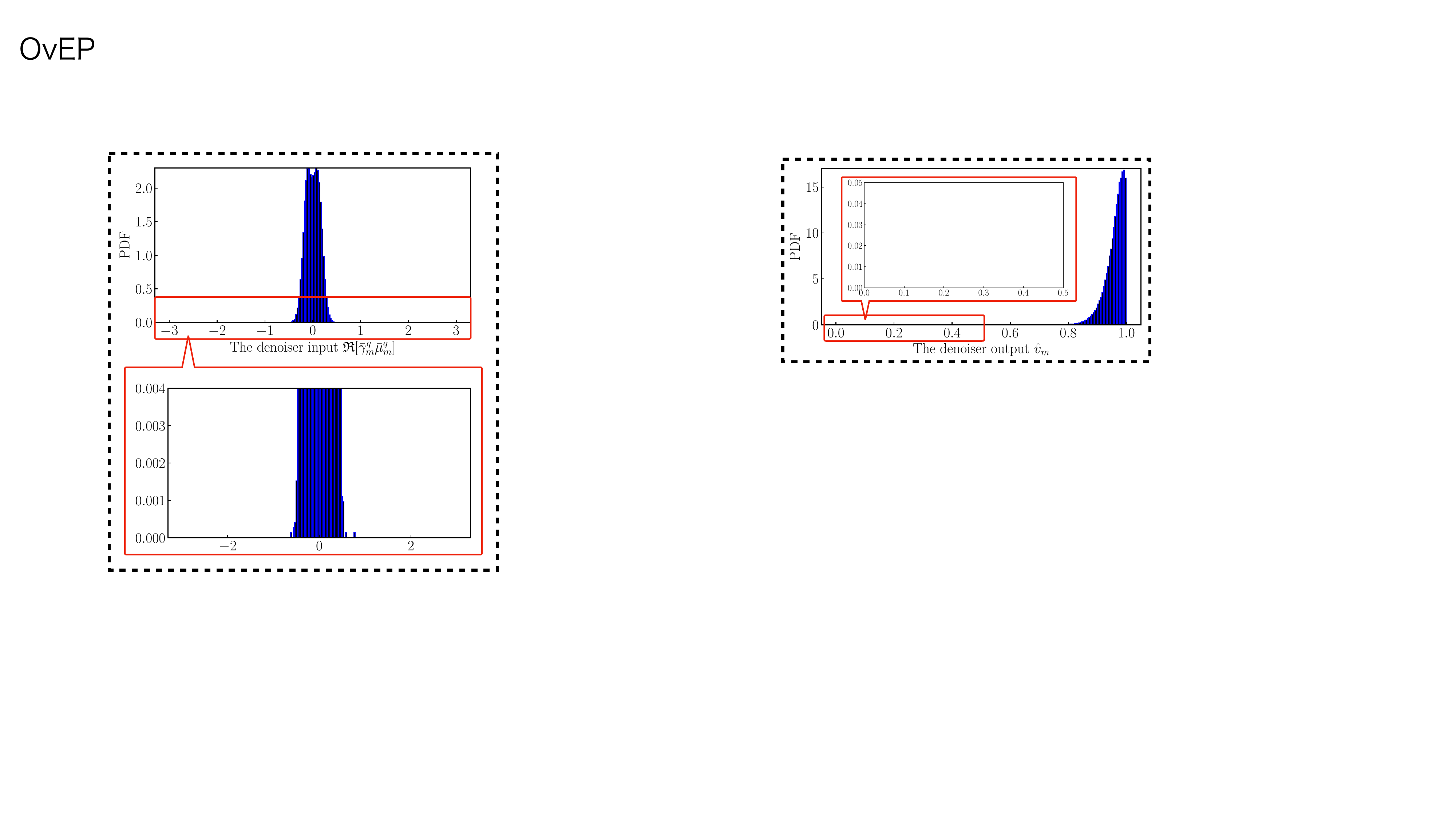}
        \subcaption{OvEP $(N_\mathrm{s}=1, L=31)$}
        \label{fig:Vari_nrm_In_OvEP}
    \end{minipage}
    \caption{Histogram of the variance of the denoiser output $\hat{v}_m$ at the initial iteration $(t=1)$ with $Q=4, N_\mathrm{b}=2, \rho=0.95,$ and $\mathrm{SNR}=12\ \mathrm{dB}$.}
    \label{fig:Hist_out}
    \vspace{-3ex}
\end{figure*}
%%%%%%%%%%%%%%%%%%%%%%%%%%%

\subsection{Simulation Setup}
\label{subsec:sim_set}

This section evaluates the performance of the proposed and conventional algorithms under the following setup.
The spatially correlated channel matrix $\mathbf{H} \in \mathbb{C}^{N \times M}$ is generated based on the Kronecker model, as in \cite{2018He_OAMPNet,2020He_OAMPNet2, 2024Furudoi_ADD}:
\begin{align*}
    \mathbf{H} = \mathbf{R}_\mathrm{r}^{1/2} \mathbf{G} \mathbf{R}_\mathrm{t}^{1/2},
\end{align*}
where $\mathbf{R}_\mathrm{r} \in \mathbb{C}^{N \times N}$ and $\mathbf{R}_\mathrm{t} \in \mathbb{C}^{M \times M}$ denote the spatial correlation matrices at the BS and the UE sides, respectively.
Each element of $\mathbf{G} \in \mathbb{C}^{N \times M}$ follows i.i.d. complex Gaussian distribution $\mathcal{CN}(0, 1)$.
The spatial correlation matrix at the UE side is set to $\mathbf{R}_\mathrm{t} = \mathbf{I}_M$, assuming that each UE is independently distributed within the coverage area.
In contrast, the spatial correlation matrix at the BS side, $\mathbf{R}_\mathrm{r}$, is generated based on the exponential attenuation model~\cite{2001Loyka_KronExp}, given by
$[\mathbf{R}_\mathrm{r}]_{n,n^\prime} = \rho^{|n - n^\prime|},$
where $\rho \in [0, 1]$ denotes the spatial correlation coefficient between receiver antennas at the BS.

In this simulation, the following methods are compared:
\begin{itemize}
    \item[1)] \textbf{\textit{LMMSE}}: 
    The classical signal detection method.
    \item[2)] \textbf{\textit{LMMSE-EP}}~\cite{2014Cespedes_EP_MIMO}: The traditional EP-based detector, as described in Section~\ref{subsec:Algorithm_Description}.
    \item[3)] \textbf{\textit{NOvEP}}~\cite{2020Wang_EP_subarray}: The low-complexity variant of the EP-based detector employing non-overlapping block partitioning, as described in Section~\ref{subsec:Algorithm_Description}.
    \item[4)] \textbf{\textit{OvEP w/o subtraction}}: The proposed algorithm without subtracting the overlapping parts from the block parts, i.e., $\tilde{\bm{\mu}}_{l}^q = \tilde{\bm{\gamma}}_{l}^q = \mathbf{0}_M$ in~\eqref{eq:mu_bar_q}. 
    This method is evaluated to assess the impact of the subtraction operation in \eqref{eq:mu_bar_q}, which is an essential step of the proposed approach.
    \item[5)] \textbf{\textit{OvEP}}: The proposed algorithm presented in Algorithm~\ref{alg:OvEP}.
\end{itemize}

Unless otherwise specified, the following parameters are used in the simulations; $N=32,\ M=24,\ Q \in \{4,16\},\ \rho \in \{0,0.95\},\ \sigma_x=1,\ \beta_x=10$.
For the proposed methods, \textit{OvEP} and \textit{OvEP w/o subtraction}, the block size and shift size are set to $N_\mathrm{b}=2$ and $N_\mathrm{s}=1$, respectively, yielding an overlapping part of size $\tilde{N_\mathrm{b}}=1$ and $L=31$ blocks.
For \textit{NOvEP}, these parameters are set to $N_\mathrm{b}=2$ and $N_\mathrm{s}=2$, resulting in no overlap, i.e., $\tilde{N_\mathrm{b}}=0$ and $L=16$.
The parameter setting $N_\mathrm{b}=2$ corresponds to the smallest block size in overlapping partitioning, with the lowest computational complexity.
The number of iterations is set to $T_\mathrm{lmmse}=16, T_\mathrm{nov}=T_\mathrm{ov}=32$ for \textit{LMMSE-EP}, \textit{NOvEP}, and \textit{OvEP}, respectively.
To enhance convergence performance, a damping scheme~\cite{2019Rangan_AMP_damping,2001Minka_EP_thesis,2014Cespedes_EP_MIMO} is introduced in lines 18-21 of Algorithm~\ref{alg:OvEP}, with the damping factor fixed at $0.5$ for all methods.

\subsection{Analysis of the Proposed Algorithm}

As stated in Proposition~\ref{proposition:mse}, the MSE of the denoiser input $\bar{\bm{\mu}}^q$ is determined by the inter-block correlation.
This section demonstrates that subtracting the overlapping part, as expressed in \eqref{eq:mu_bar_q}, effectively reduces the inter-block correlation, thereby improving the MSE performance.

To quantify the inter-block correlation across all block combinations, we introduce the normalized inter-block correlation matrix $\mathbf{\Phi} \in \mathbb{C}^{L \times L}$, whose $(l,l^\prime)$-th element is defined as
\begin{align}
    \left [\mathbf{\Phi} \right ]_{l,l^\prime} \triangleq \frac{ \mathbb{E} \left [ \bar{\boldsymbol{\mu}}_l^{q\mathrm{H}} \bar{\boldsymbol{\mu}}_{l^\prime}^q \right ] }{ \mathbb{E} \left [ \| \bar{\boldsymbol{\mu}}_l^{q} \|_2 \right ] \mathbb{E} \left [ \| \bar{\boldsymbol{\mu}}_{l^\prime}^{q} \|_2 \right ]}.
\end{align}

% Block Correlation 
Figs.~\ref{fig:Cmap_r0} and \ref{fig:Cmap_r095} show the inter-block correlation matrices for \textit{NOvEP}, \textit{OvEP w/o subtraction}, and \textit{OvEP} at the initial iteration $(t=1)$.
In the case of uncorrelated channels ($\rho=0$), all approaches exhibit low inter-block correlation.
By contrast, when the spatial correlation is high ($\rho=0.95$), the inter-block correlations of \textit{NOvEP} and \textit{OvEP w/o subtraction} increase significantly, whereas the proposed \textit{OvEP} effectively suppresses the inter-block correlation.
The comparison between \textit{OvEP} and \textit{OvEP w/o subtraction} demonstrates that the subtraction operation in \eqref{eq:mu_bar_q} plays a key role in reducing inter-block correlation.

% MSE_vs_SNR
Fig.~\ref{fig:MSE_vs_SNR} shows the first term $\sum_{l,l^\prime} \mathbb{E} [ \bar{\bm{\mu}}^{q\mathrm{H}}_l \bar{\bm{\mu}}^{q}_{l^\prime}]$ and the second term $\sum_{l} \mathbb{E} \left [ \mathbf{x}^\mathrm{H} \bar{\bm{\mu}}_l^q \right ] $ of the MSE in~\eqref{eq:MSE_in} at the initial iteration under various $\mathrm{SNR} \triangleq \sigma_x/ \sigma_z$.
The first term corresponds to the inter-block correlation, whereas the second term corresponds to the alignment of $\bar{\bm{\mu}}^q$ with the target vector $\mathbf{x}$.
As illustrated in Fig.~\ref{fig:MSE_vs_SNR}, the second term remains close to the constant value $\sum_{l} \mathbb{E} \left [ \mathbf{x}^\mathrm{H} \bar{\bm{\mu}}_l^q \right ] \simeq \sigma_x M$, which demonstrates the validity of Proposition~\ref{proposition:mse} and the approximation in~\eqref{eq:align_x_2}.
Furthermore, the proposed method, through the subtraction of overlapping parts, effectively reduces the first term associated with inter-block correlation, thereby improving the MSE performance.

% デノイザ入出力の評価
To assess the impact of MSE reduction at the initial iteration, we analyze the statistical behavior of the denoiser input and output.
In the case of \ac{QPSK}, i.e., $Q\!=\!4$, the mean $\hat{x}_m$ and variance $\hat{v}_m$ of the denoiser output are obtained from~\eqref{eq:mu_sigma_denoiser} as 
\begin{subequations}
\label{eq:denoiser_qpsk}
\begin{align}
    \hat{x}_m &= \eta_\mathrm{e} (\bar{\mu}^q_m, \bar{\gamma}_m^q) 
    =&& \sqrt{\frac{\sigma_x}{2}} \Big \{ 
    \mathrm{tanh} \left ( \sqrt{2 \sigma_x}\mathfrak{R} \left [ \bar{\gamma}_m^q  \bar{\mu}^q_m \right ] \right ) \nonumber \\
    &&& + j \ \mathrm{tanh} \left ( \sqrt{2 \sigma_x} \mathfrak{I} \left [ \bar{\gamma}_m^q \bar{\mu}^q_m \right ] \right ) \Big \} \\
    \hat{v}_m &= \eta_\mathrm{v} (\bar{\mu}^q_m, \bar{\gamma}_m^q) =&& \sigma_x - |\hat{x}_m|^2,
\end{align}
\end{subequations}

As can be seen from \eqref{eq:denoiser_qpsk}, the denoiser outputs $\{\hat{x}_m, \hat{v}_m \}$ are determined by the input $\bar{\gamma}_m^q \bar{\mu}_m^q$. 
When this input takes a large value, the denoiser operates in the inactive region~\cite{2024Furudoi_ADD}, where the denoiser output $\hat{x}_m$ is generated by a hard-decision that selects one symbol from the constellation points $\mathcal{X}=\{\mathrm{x}_1, \ldots, \mathrm{x}_Q\}$.
When such large inputs arise erroneously without reflecting the true reliability, they are regarded as \textit{outliers}.
In this case, the output variance $\hat{v}_m$ becomes very small, approaching zero.
If these small variances are generated at the initial iteration stage, erroneously hard-decided symbols $\hat{x}_m$ with spuriously high confidence propagate through subsequent iterations, leading to error propagation.

To validate the effectiveness of the proposed subtraction mechanism,  
Figs.~\ref{fig:Hist_in} and \ref{fig:Hist_out} show the histogram of the denoiser input $\mathfrak{R}[\bar{\gamma}_m^q \bar{\mu}_m^q]$ and the variance of the denoiser output $\hat{v}_m$ at the initial iteration $(t=1)$.
As shown in Fig.~\ref{fig:Hist_in}, in \textit{NOvEP} and \textit{OvEP w/o subtraction}, large values of the denoiser input arise due to the combination of highly correlated LMMSE outputs in lines 13-14 of Algorithm~\ref{alg:OvEP}. 
As a result, the denoiser operates in the inactive region, producing small variances at the initial iteration, as illustrated in Fig.~\ref{fig:Hist_out}.
In contrast, Fig.~\ref{fig:Mean_nrm_In_OvEP} demonstrates that the proposed method suppresses the occurrence of outliers in the denoiser input by reducing inter-block correlation through the subtraction of overlapping parts in~\eqref{eq:mu_bar_q}.
Consequently, excessively small variances in the denoiser output are avoided at the initial iteration, thereby preventing error propagation in subsequent iterations.

\vspace{-2mm}
\subsection{BER Performance}
\label{subsec:BER}

Figs.~\ref{fig:BER_vs_SNR} and \ref{fig:BER_vs_Iterations} show the \ac{BER} performance with respect to $\mathrm{SNR} = \sigma_x/ \sigma_z$ and the number of iterations, respectively.
As shown in Fig.~\ref{fig:BER_vs_SNR_r0_QPSK}, in the case of no spatial correlation $(\rho=0)$, all low-complexity EP-based methods achieve performance close to that of \textit{LMMSE-EP}.
However, as shown in Fig.~\ref{fig:BER_vs_SNR_r095_QPSK}, under high spatial correlation $(\rho=0.95)$, \textit{NOvEP} and \textit{OvEP w/o subtraction} exhibit severe BER degradation due to their inability to capture the spatial correlation.
In contrast, the proposed method can improve BER performance and approaches that of \textit{LMMSE-EP} by effectively reducing the inter-block correlation, as illustrated in Figs.~\ref{fig:Cmap_r0} and \ref{fig:Cmap_r095}.
Furthermore, as shown in Fig.~\ref{fig:BER_vs_SNR_r095_16QAM}, under higher-order modulation $(Q=16)$, the proposed method achieves a substantial BER reduction, whereas the conventional methods fail to detect the signals reliably.
% at all.

\vspace{-2mm}
\subsection{Computational Complexity}
\label{subsec:flops}

Finally, to evaluate the computational complexity of the proposed and conventional algorithms, Fig.~\ref{fig:FLOPs_vs_N} presents the FLOPs, defined as the number of complex multiplications, versus the number of antennas $N$ with the compression rate fixed at $\alpha \triangleq N/M = 1.33$.
As summarized in Table~\ref{table:FLOPs}, \textit{LMMSE-EP} requires full-dimensional matrix operation with a complexity order of $\mathcal{O}(M^3 + M^2N)$ per iteration, whereas the proposed method requires $\mathcal{O}( L (M N_b^2 + N_b^3) + (L-1) (M \tilde{N}_\mathrm{b}^2 + \tilde{N}_\mathrm{b}^3 ) )$ due to the partitioning of the large-scale measurement into small-size blocks.
As depicted in Fig.~\ref{fig:FLOPs_vs_N}, the computational complexity of all methods increases with the number of antennas $N$.
However, the block-partition-based methods, \textit{OvEP} and \textit{NOvEP}, alleviate this increase compared to \textit{LMMSE-EP}.
These results confirm that \textit{OvEP} can significantly improve BER performance with only a slight increase in computational complexity compared to \textit{NOvEP}.

%%%%%%%%%%%%%%%%%%%%%%%%%%%
% BER_vs_SNR 
\begin{figure}[t]
    % (rho=0, Q=4)
    \begin{minipage}{1.0\columnwidth}
        \centering
        \includegraphics[width=\linewidth]{./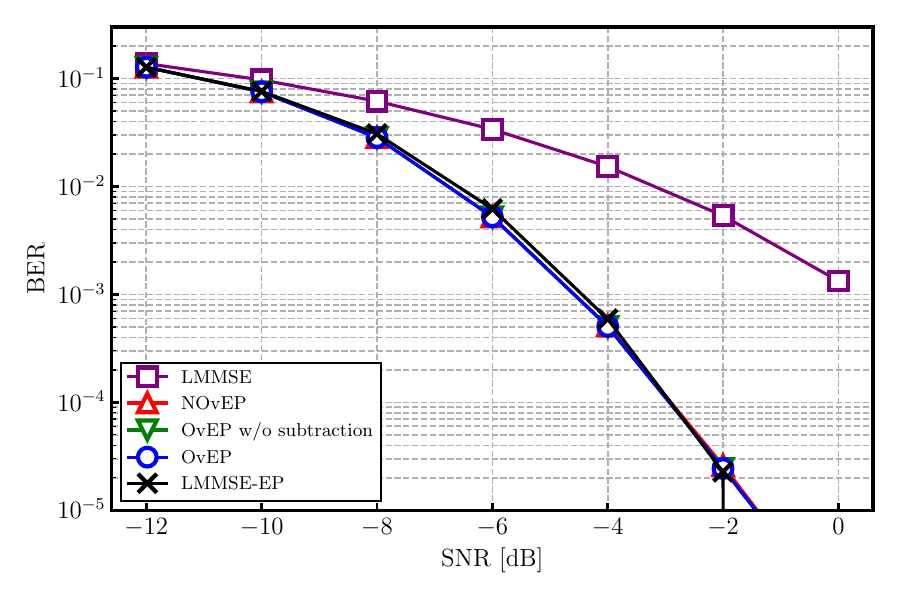}
        \vspace{-5ex}
        \subcaption{Uncorrelated channels ($\rho=0$) with $Q=4$.} 
        \label{fig:BER_vs_SNR_r0_QPSK}
    \end{minipage}
    % 
    % (rho=0, Q=4)
    \begin{minipage}{1.0\columnwidth}
        \centering
        \includegraphics[width=\linewidth]{./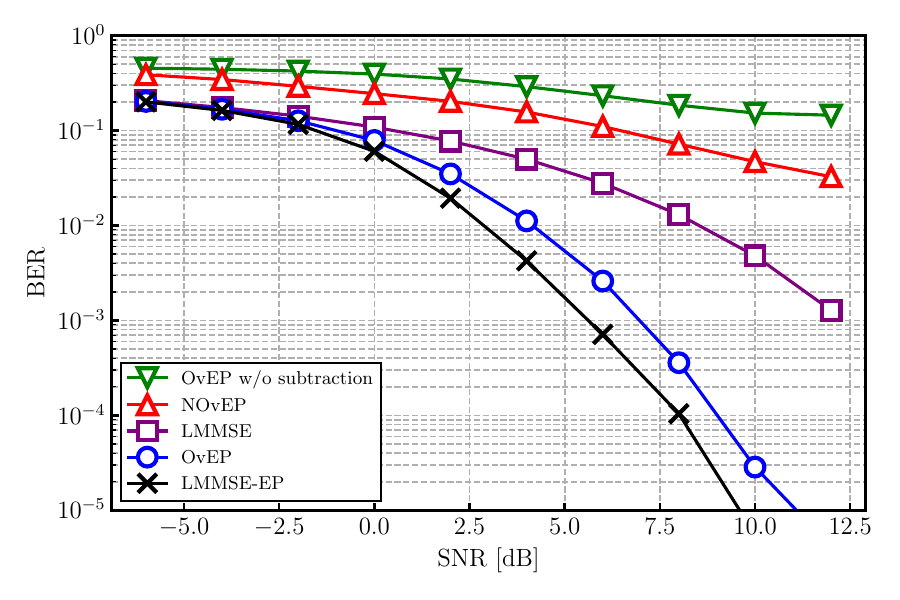}
        \vspace{-5ex}
        \subcaption{Highly correlated channels ($\rho=0.95$) with $Q=4$.} 
        \label{fig:BER_vs_SNR_r095_QPSK}
    \end{minipage}
    % 
    % (rho=0.95, Q=16)
    \begin{minipage}{1.0\columnwidth}
        \centering
        \includegraphics[width=\linewidth]{./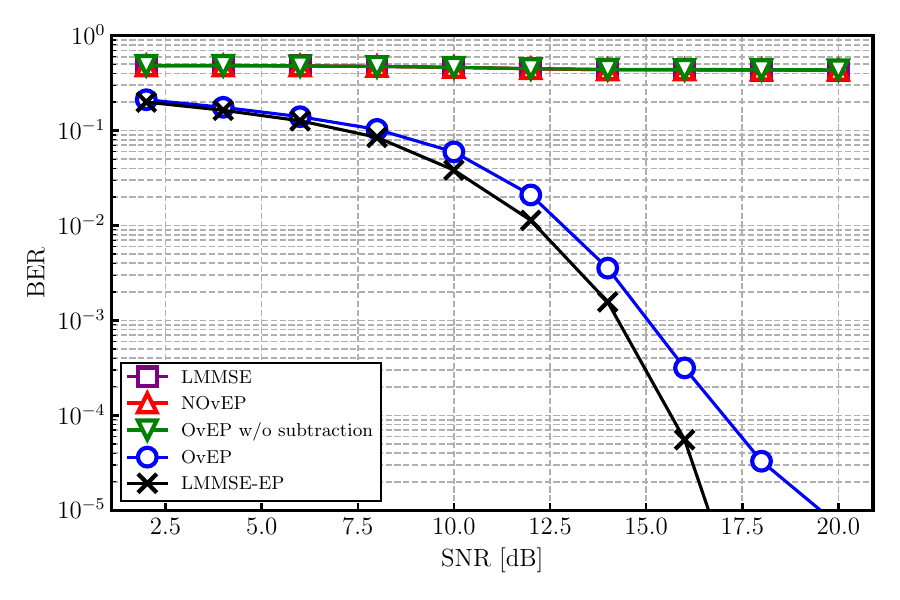}
        \vspace{-5ex}
        \subcaption{Highly correlated channels ($\rho=0.95$) with $Q=16$.} 
        \label{fig:BER_vs_SNR_r095_16QAM}
    \end{minipage}
    \caption{BER versus SNR with $N=32$ and $M=24$.} 
    \label{fig:BER_vs_SNR}
    \vspace{-3ex}
\end{figure}
%%%%%%%%%%%%%%%%%%%%%%%%%%%

%%%%%%%%%%%%%%%%%%%%%%%%%%%
% BER_vs_Iterations
\begin{figure}[t]
    % (rho=0, Q=4)
    \begin{minipage}{1.0\columnwidth}
        \centering
        \includegraphics[width=\linewidth]{./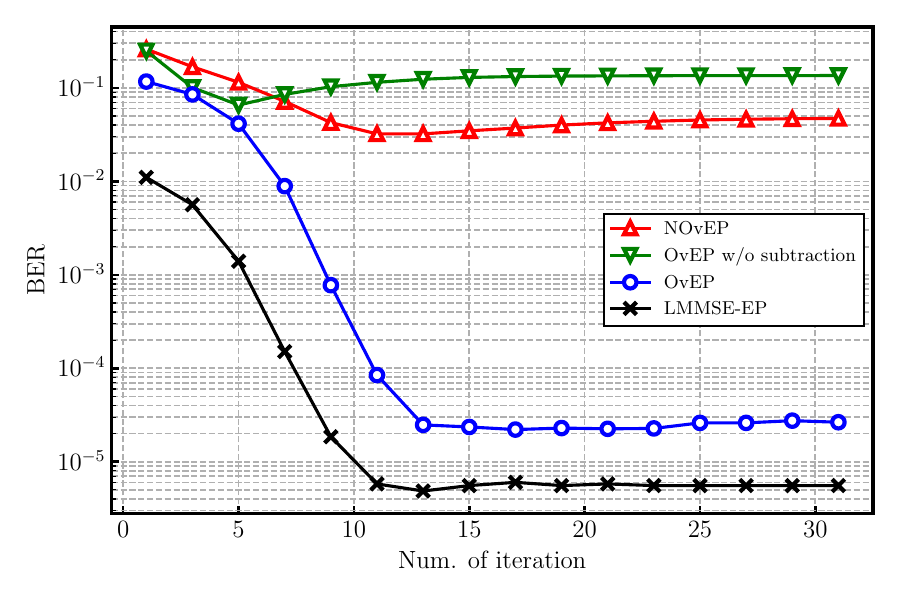}
        \vspace{-5ex}
        % \subcaption{High correlation with $\rho=0.95, Q=4$.} 
        % \label{fig:BER_vs_Iterations_r095_QPSK}
    \end{minipage}
    % 
    %%%%%%%
    % % (rho=0.95, Q=16)
    % \begin{minipage}{1.0\columnwidth}
    %     \centering
    %     \includegraphics[width=\linewidth]{./fig/BER_vs_Iterations_r095_16QAM.pdf}
    %     \vspace{-5ex}
    %     \subcaption{High correlation with $\rho=0.95, Q=16$.} 
    %     \label{fig:BER_vs_Iterations_r095_16QAM}
    % \end{minipage}
    %%%%%%%
    % 
    \caption{BER versus iterations with $N=32, M=24, \rho=0.95,$ and $Q=4$.} 
    \label{fig:BER_vs_Iterations}
    \vspace{-3ex}
\end{figure}
%%%%%%%%%%%%%%%%%%%%%%%%%%%

%%%%%%%%%%%%%%%%%%%%%%%%%%%
% FLOPs_vs_N
\begin{figure}[t]
    \centering
    \includegraphics[width=\linewidth]{./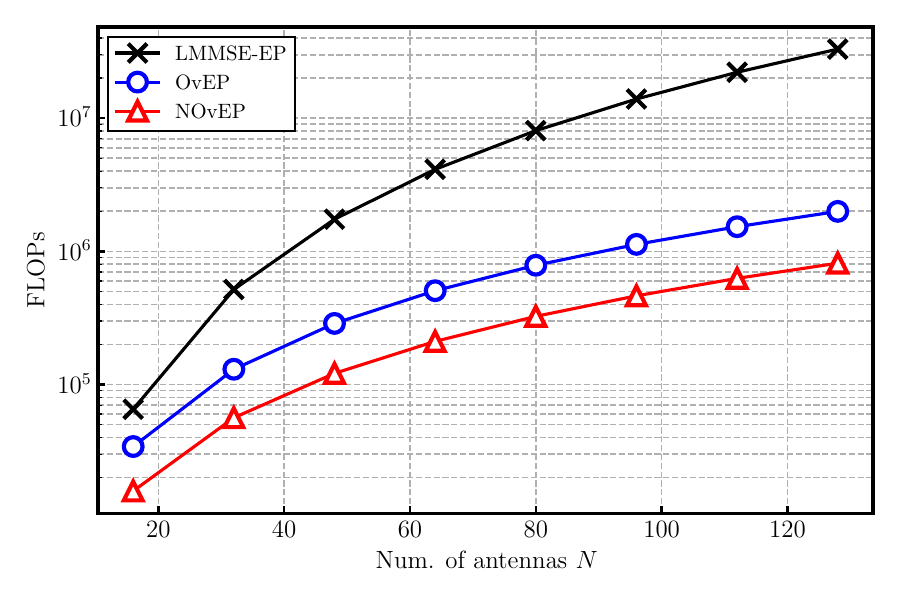}
    \vspace{-5ex}
    \caption{FLOPs versus the number of antennas $N$ with the compression rate fixed at $\alpha \triangleq N/M = 1.33$ and $N_\mathrm{b}=2$.}
    \label{fig:FLOPs_vs_N}
    \vspace{-4ex}
\end{figure}
%%%%%%%%%%%%%%%%%%%%%%%%%%%

\section{Conclusion}

This paper proposed an EP-based detector, termed \textit{OvEP}, for highly correlated MIMO systems.
In the proposed method, the large-scale measurements are partitioned into overlapping blocks, where low-complexity LMMSE-based filters are designed for both block and overlapping parts.
By subtracting the LMMSE outputs of the overlapping parts from those of the block parts, the inter-block correlation induced by highly correlated channels can be effectively reduced.
Theoretical analysis established that the MSE of the denoiser input is determined by the inter-block correlation among block-wise LMMSE outputs.
Furthermore, it was theoretically proven that the fixed point of \textit{OvEP} corresponds to a stationary point of a relaxed version of the KL minimization problem. 
Comprehensive numerical results validated that \textit{OvEP} achieves substantial performance improvements over the conventional \textit{NOvEP} in highly correlated channels, with only a slight and manageable increase in computational complexity. 

%%%%%%%%%%%%%%%%%%
% Appendix
%%%%%%%%%%%%%%%%%%

\appendices
\section{Proof of Proposition~\ref{proposition:mse}}
\label{apx:mse}

From \eqref{eq:mu_bar_q}, $\bar{\bm{\mu}}_l^q$ can be expressed as
\begin{align}
    \label{eq:mu_bar_q_l}
    \bar{\bm{\mu}}_l^q
    =\mathbf{D}(\bar{\boldsymbol{\gamma}}^q)^{-1} \left \{
    \mathbf{D} (\boldsymbol{\gamma}_l^q) \boldsymbol{\mu}_l^q - 
    \mathbf{D} (\tilde{\boldsymbol{\gamma}}_l^q) \tilde{\boldsymbol{\mu}}_l^q \right \}.
\end{align}

Substituting \eqref{eq:mu_bar_q_l} into \eqref{eq:MSE_in}, 
the second term of the MSE in \eqref{eq:MSE_in} can be expressed as
%%%%%%%%%%%%%
\begin{align}
    &\sum_{l=1}^L \mathbb{E}_{p(\mathbf{H}, \mathbf{x}, \mathbf{z})} \left [ \mathbf{x}^\mathrm{H} \bar{\boldsymbol{\mu}}_l^q \right ]
    = \sum_{l=1}^L \mathrm{tr} \left ( \mathbb{E}_{p(\mathbf{H}, \mathbf{x}, \mathbf{z})} \left [  \bar{\boldsymbol{\mu}}_l^q \mathbf{x}^\mathrm{H} \right ] \right ) \nonumber
\end{align}
\begin{align}
    \label{eq:align_x}
    &= \mathbb{E}_{p(\mathbf{H})} \Bigg [
    \sum_{l=1} ^L
    \mathrm{tr} \Bigg \{
    \mathbf{D}(\bar{\boldsymbol{\gamma}}^q)^{-1} 
    \Big ( \mathbf{D} (\boldsymbol{\gamma}_l^q) 
    \mathbb{E}_{p(\mathbf{x}, \mathbf{z})} \left [ \boldsymbol{\mu}_l^q \mathbf{x}^\mathrm{H} \right ] \nonumber \\
    & \qquad \qquad - \mathbf{D} (\tilde{\boldsymbol{\gamma}}_l^q) 
    \mathbb{E}_{p(\mathbf{x}, \mathbf{z})} \left [ \tilde{\boldsymbol{\mu}}_l^q \mathbf{x}^\mathrm{H} \right ]
    \Big ) \Bigg \} \Bigg ].
\end{align}
%%%%%%%%%%%%%

From lines 3-4 of Algorithm~\ref{alg:OvEP}, $\mathbb{E}_{p(\mathbf{x}, \mathbf{z})} \left [ \boldsymbol{\mu}_l^q \mathbf{x}^\mathrm{H} \right ]$ in \eqref{eq:align_x} can be calculated as
\begin{align}
    \label{eq:mu_x}
    \mathbb{E}_{p(\mathbf{x},\mathbf{z})} \left [ \boldsymbol{\mu}_l^q \mathbf{x}^\mathrm{H} \right ] 
    & = \sigma_z^{-1} \mathbf{\Sigma}_{l}^u \mathbf{H}_l^\mathrm{H} \mathbb{E}_{p(\mathbf{x},\mathbf{z})} \left [ \mathbf{y}_l \mathbf{x}^\mathrm{H} \right ] \nonumber \\
    & \overset{(a)}{=} \sigma_x \left ( \mathbf{H}_l^\mathrm{H} \mathbf{H}_l + \frac{\sigma_z}{\beta_x} \mathbf{I}_M \right )^{-1} \mathbf{H}_l^\mathrm{H} \mathbf{H}_l \nonumber \\
    & \overset{(b)}{=} \sigma_x \left \{ \mathbf{I}_M - 
    \left ( \frac{\beta_x}{\sigma_z }\mathbf{H}_l^\mathrm{H} \mathbf{H}_l + \mathbf{I}_M \right )^{-1} \right \} \nonumber \\
    & \overset{(c)}{=} \sigma_x \left ( \mathbf{I}_M - \beta_x^{-1} \mathbf{\Sigma}_l^u \right ),
\end{align}
where 
the equality $(a)$ follows from $\mathbb{E}_{p(\mathbf{x}, \mathbf{z})} \left [ \mathbf{y}_l \mathbf{x}^\mathrm{H} \right ] = \sigma_x \mathbf{H}_l$
and
$\bm{\gamma}_l^w = \tilde{\bm{\gamma}}_l^w = \beta_x^{-1} \mathbf{1}_M$ at the initial iteration $(t=1)$,
the equality $(b)$ follows from the identity $\left ( \mathbf{A}^\mathrm{H} \mathbf{A} + \alpha \mathbf{I}_M \right )^{-1} \mathbf{A}^\mathrm{H} \mathbf{A} = \mathbf{I}_M - \left ( \alpha^{-1} \mathbf{A}^\mathrm{H} \mathbf{A} + \mathbf{I}_M \right )^{-1} $, and
the equality $(c)$ follows from \eqref{eq:mu_sigma_u}.

Through the same derivation as \eqref{eq:mu_x}, $\mathbb{E}_{p(\mathbf{x}, \mathbf{z})} \left [ \tilde{\bm{\mu}}_l^q \mathbf{x}^\mathrm{H} \right ]$ in \eqref{eq:align_x} can be calculated as
\begin{align}
    \label{eq:mu_x_tilde}
    \mathbb{E}_{p(\mathbf{x},\mathbf{z})} \left [ \tilde{\boldsymbol{\mu}}_l^q \mathbf{x}^\mathrm{H} \right ] 
    = \sigma_x \left ( \mathbf{I}_M - \beta_x^{-1} \tilde{\mathbf{\Sigma}}_l^u \right ).
\end{align}
Substituting \eqref{eq:mu_x} and \eqref{eq:mu_x_tilde} into \eqref{eq:align_x} yields
\begin{align}
    \label{eq:align_x_2}
    &\sum_{l=1}^L \mathbb{E}_{p(\mathbf{x}, \mathbf{H}, \mathbf{z})} \left [ \mathbf{x}^\mathrm{H} \bar{\boldsymbol{\mu}}_l^q \right ] \nonumber \\
    &= \mathbb{E}_{p(\mathbf{H})} \Bigg [
    \sigma_x \mathrm{tr} \Bigg \{
        \underbrace{
    \mathbf{D}(\bar{\boldsymbol{\gamma}}^q)^{-1}  \sum_{l=1}^L \left (
    \mathbf{D}(\boldsymbol{\gamma}_l^q) - 
    \mathbf{D}(\tilde{\boldsymbol{\gamma}}_l^q)
    \right ) 
        }_{\overset{(a)}{=} \mathbf{I}_M}
    \Bigg \} \Bigg ] \nonumber \\
    &+\mathbb{E}_{p(\mathbf{H})} \Bigg [
    \sum_{l=1}^L \underbrace{ \mathrm{tr} 
    \Bigg \{ \mathbf{D}(\bar{\boldsymbol{\gamma}}^q)^{-1} \left (
    \mathbf{D}(\tilde{\boldsymbol{\gamma}}_l^q) \tilde{\mathbf{\Sigma}}_l^u - \mathbf{D}(\boldsymbol{\gamma}_l^q) \mathbf{\Sigma}_l^u \right ) \Bigg \} 
    }_{\overset{(b)}{\simeq} 0} \Bigg ] \nonumber \\
    & \simeq \sigma_x M,
\end{align}
where the equality $(a)$ follows from line 13 of Algorithm~\ref{alg:OvEP}.
The equality $(b)$ in \eqref{eq:align_x_2} follows from 
\begin{align}
    \mathrm{tr} & \left \{ \mathbf{D}(\bar{\boldsymbol{\gamma}}^q)^{-1} \left ( \mathbf{D}(\tilde{\boldsymbol{\gamma}}_l^q) \tilde{\mathbf{\Sigma}}_l^u - \mathbf{D}(\boldsymbol{\gamma}_l^q) \mathbf{\Sigma}_l^u \right ) \right \} \nonumber \\
    &= \mathbf{1}_M^\mathrm{T} \left \{ \left ( \tilde{\boldsymbol{\gamma}}_l^q \odot \mathbf{d}( \tilde{\mathbf{\Sigma}}_l^u) - {\boldsymbol{\gamma}}_l^q \odot \mathbf{d}({\mathbf{\Sigma}}_l^u) \right ) \oslash \bar{\boldsymbol{\gamma}}^q \right \} \nonumber \\
    & \overset{(c)}{\simeq} \mathbf{1}_M^\mathrm{T} \left \{ \left ( \tilde{\boldsymbol{\gamma}}_l^u \odot \mathbf{d}(\tilde{\mathbf{\Sigma}}_l^u) - {\boldsymbol{\gamma}}_l^u \odot \mathbf{d}({\mathbf{\Sigma}}_l^u) \right ) \oslash \bar{\boldsymbol{\gamma}}^q \right \} \nonumber \\
    &= \mathbf{1}_M^\mathrm{T} \left \{ \left ( \mathbf{0}_M \right ) \oslash \bar{\boldsymbol{\gamma}}^q \right \} = 0,
\end{align}
where the equality $(c)$ follows from 
$\bm{\gamma}_l^q = \bm{\gamma}_l^u - \bm{\gamma}_l^w \simeq \bm{\gamma}_l^u$ when $\bm{\gamma}_l^u \gg \bm{\gamma}_l^w$.
Finally, substituting \eqref{eq:align_x_2} into \eqref{eq:MSE_in}, $\mathrm{MSE} (\bar{\bm{\mu}}^q)$ can be expressed as \eqref{eq:MSE_in_apx}.

\section{Proof of Proposition~\ref{proposition:fixed}}
\label{apx:fixed}

Let $\{ \bm{\alpha}_l, \bm{\beta}_l \}_{l=0}^{L}$ and $\{ \tilde{\bm{\alpha}}_l, \tilde{\bm{\beta}}_l \}_{l=1}^{L-1}$ denote the Lagrange multipliers corresponding to the optimization problem with equality constraints defined in \eqref{eq:opt_moment}.
Then, the Lagrangian $\mathcal{L}$ can be expressed as 
\begin{align}
    \label{eq:Lagrangian}
    &\mathcal{L} ( \{b_l \}_{l=0}^L, \{\tilde{b}_l \}_{l=1}^{L-1}, q,
    \{ \bm{\alpha}_l, \bm{\beta}_l \}_{l=0}^{L}, 
    \{ \tilde{\bm{\alpha}}_l, \tilde{\bm{\beta}}_l \}_{l=1}^{L-1} ) \nonumber \\ 
    &= J \left ( \{b_l \}_{l=0}^L, \{\tilde{b}_l \}_{l=1}^{L-1}, q \right ) \nonumber \\
    &+ \sum_{l=0}^L \bm{\alpha}_l^\mathrm{T} \left \{  
    \mathbf{d} \left ( \mathbb{E}_{b_l (\mathbf{x}) }[ \mathbf{x} \mathbf{x}^\mathrm{H} ] \right ) -
    \mathbf{d} \left ( \mathbb{E}_{q (\mathbf{x})}[\mathbf{x} \mathbf{x}^\mathrm{H} ] \right ) \right \} \nonumber \\
    &+ \sum_{l=1}^{L-1} \tilde{\bm{\alpha}}_l^\mathrm{T} \left \{  
    \mathbf{d} \left ( \mathbb{E}_{\tilde{b}_l (\mathbf{x}) }[ \mathbf{x} \mathbf{x}^\mathrm{H} ] \right ) -
    \mathbf{d} \left ( \mathbb{E}_{q (\mathbf{x})}[\mathbf{x} \mathbf{x}^\mathrm{H} ] \right ) \right \} \nonumber \\
    &+ 2 \sum_{l=0}^{L} \mathfrak{R} \left [ \bm{\beta}_l^\mathrm{H} \left ( \mathbb{E}_{b_l (\mathbf{x}) }[\mathbf{x}] - \mathbb{E}_{q (\mathbf{x})}[\mathbf{x}] \right )\right ] \nonumber \\
    &+ 2 \sum_{l=1}^{L-1} \mathfrak{R} \left [ \tilde{\bm{\beta}}_l^\mathrm{H} \left ( \mathbb{E}_{\tilde{b}_l (\mathbf{x})}[\mathbf{x}] - \mathbb{E}_{q (\mathbf{x})}[\mathbf{x}] \right )\right ].
\end{align}
The Lagrange multipliers are set to 
\begin{subequations}
\label{eq:Lagrange_multipliers}
\begin{align}
    \bm{\alpha}_0 &= \bar{\bm{\gamma}}^q,\ 
    \bm{\beta}_0 = - \bar{\bm{\gamma}}^q \odot \bar{\bm{\mu}}^q, \\
    \bm{\alpha}_l &= \bm{{\gamma}}_l^w,\ 
    \bm{\beta}_l = - \bm{{\gamma}}_l^w \odot \bm{\mu}_l^w, \\
    \tilde{\bm{\alpha}}_l &= - \bm{\tilde{\gamma}}_l^w,\ 
    \tilde{\bm{\beta}}_l = \bm{\tilde{\gamma}}_l^w \odot \tilde{\bm{\mu}}_l^w.
\end{align}
\end{subequations}
The validity of this choice of Lagrange multipliers in \eqref{eq:Lagrange_multipliers} can be confirmed at the end of this proof.

In what follows, the stationary point of the Lagrangian in \eqref{eq:Lagrangian} is derived using the functional derivative with respect to the PDFs $b_0(\mathbf{x}), b_l(\mathbf{x}), \tilde{b}_l(\mathbf{x}), q(\mathbf{x})$.

% qの導出
Focusing on $q(\mathbf{x})$, the Lagrangian can be expressed as 
\begin{align*}
    \mathcal{L} &= \mathrm{H}(q) - 
    \underbrace{
    \left ( \bm{\alpha}_0 + \sum_{l=1}^L \bm{\alpha}_l + \sum_{l=1}^{L-1} \tilde{\bm{\alpha}}_l \right )^\mathrm{T} }_{ \overset{(a)}{=} \bm{\gamma}_\mathrm{f} }
    \mathbf{d} \left (\mathbb{E}_{q(\mathbf{x})} \left [ \mathbf{x} \mathbf{x}^\mathrm{H} \right ] \right ) \nonumber \\
    &\quad -2 \mathfrak{R} \Bigg [ 
    \underbrace{
    \left ( \bm{\beta}_0 + \sum_{l=1}^L \bm{\beta}_l + \sum_{l=1}^{L-1} \tilde{\bm{\beta}}_l\right )^\mathrm{H} }_{ \overset{(a)}{=} - \bm{\gamma}_\mathrm{f} \odot \mathbf{x}_\mathrm{f}}
    \mathbb{E}_{q(\mathbf{x})} \left [ \mathbf{x}\right ] \Bigg ]
    +\mathrm{const} \nonumber \\
    &= - \mathbb{E}_{q(\mathbf{x})} \left [\ln q(\mathbf{x}) \right ] + \mathbb{E}_{q(\mathbf{x})} \left [ \| \mathbf{x} -\mathbf{x}_\mathrm{f} \|_{\bm{\gamma}_\mathrm{f}}^2 \right ] + \mathrm{const},
\end{align*}
where the equality $(a)$ holds using the fixed point \eqref{eq:fixed_point_2} and the Lagrange multipliers \eqref{eq:Lagrange_multipliers}.
Then, taking the functional derivative with respect to $q(\mathbf{x})$ yields
\begin{align*}
    \frac{\delta L}{\delta q(\mathbf{x})}
    = -\ln q(\mathbf{x}) - 1 
    + \| \mathbf{x} - \mathbf{x}_\mathrm{f} \|^2_{ \bm{\gamma}_\mathrm{f} }.
\end{align*}
Hence, it can be seen that the PDF $q(\mathbf{x})$ at the stationary point, which satisfies $\frac{\delta \mathcal{L}}{\delta q(\mathbf{x})} = 0, \forall \mathbf{x}$, is equivalent to \eqref{eq:q}.

Similarly, the functional derivative with respect to $b_0 (\mathbf{x})$, $b_l(\mathbf{x}),\ l \in \{1,\ldots, L\}$ and $\tilde{b}_l(\mathbf{x}),\ l \in \{1,\ldots, L-1\}$ can be calculated as 
\begin{align*}
    \frac{\delta \mathcal{L}}{\delta b_0(\mathbf{x})}
    &=\ln b_0 (\mathbf{x}) + 1 - \ln f_0 (\mathbf{x})
    + \| \mathbf{x} - \bar{\bm{\mu}}^q \|^2_{ \bar{\bm{\gamma}}^q}, \\
    \frac{\delta L}{\delta b_l(\mathbf{x})}
    &=\ln b_l(\mathbf{x}) + 1 - \ln f_l(\mathbf{x}) + \| \mathbf{x} - \boldsymbol{\mu}_l^w \|^2_{\boldsymbol{\gamma}_l^w}, \\
    \frac{\delta L}{\delta \tilde{b}_l(\mathbf{x})}
    &= -\ln \tilde{b}_l(\mathbf{x}) - 1 + \ln \tilde{f}_l(\mathbf{x})
    - \| \mathbf{x} - \tilde{\boldsymbol{\mu}}_l^w \|^2_{\tilde{\boldsymbol{\gamma}}_l^w}.
\end{align*}
From these derivatives, we can confirm that the PDFs $b_0(\mathbf{x})$, $b_l(\mathbf{x})$, and $\tilde{b}_l(\mathbf{x})$ at the stationary point satisfying $\frac{\delta \mathcal{L}}{\delta b_0(\mathbf{x})} = 0$, $\frac{\delta \mathcal{L}}{\delta b_l(\mathbf{x})} = 0$ and $\frac{\delta \mathcal{L}}{\delta \tilde{b}_l(\mathbf{x})} = 0, \forall \mathbf{x}$ are equivalent to \eqref{eq:b_0}, \eqref{eq:b_l}, and \eqref{eq:b_l_tilde}, respectively.

Finally, the means and variances of the PDFs $b_0, b_l, \tilde{b}_l$ are calculated.
Following the same procedure as in the derivations of \eqref{eq:mu_sigma_denoiser}, \eqref{eq:mu_sigma_u}, and \eqref{eq:mu_sigma_u_tilde}, 
the means and variances can be calculated as 
\begin{align*}
    % b_0
    &\mathbb{E}_{b_0(\mathbf{x})} [\mathbf{x}] \!=\! \hat{\mathbf{x}}, 
    && \!\!\!\!\! \mathbf{d} \left ( \mathbb{E}_{b_0(\mathbf{x})} [ (\mathbf{x} - \hat{\mathbf{x}}) (\mathbf{x} - \hat{\mathbf{x}})^\mathrm{H} ] \right ) \!=\! \mathbf{1}_M \oslash \hat{\bm{\gamma}}, \\
    % 
    % b_l
    &\mathbb{E}_{b_l(\mathbf{x})} [\mathbf{x}] \!=\! \bm{\mu}_l^u,
    && \!\!\!\!\! \mathbf{d} \left ( \mathbb{E}_{b_l(\mathbf{x})} [ (\mathbf{x} - \bm{\mu}_l^u) (\mathbf{x} - \bm{\mu}_l^u)^\mathrm{H} ] \right ) \!=\! \mathbf{1}_M \oslash \bm{\gamma}_l^u, \\
    % 
    % b_l_tilde
    &\mathbb{E}_{\tilde{b}_l(\mathbf{x})} [\mathbf{x}] \!=\! \tilde{\bm{\mu}}_l^u,
    && \!\!\!\!\! \mathbf{d} \left ( \mathbb{E}_{\tilde{b}_l(\mathbf{x})} [ (\mathbf{x} - \tilde{\bm{\mu}}_l^u) (\mathbf{x} - \tilde{\bm{\mu}}_l^u)^\mathrm{H} ] \right ) \!=\! \mathbf{1}_M \oslash \tilde{\bm{\gamma}}_l^u.
\end{align*}
% % 
From these results and the equality \eqref{eq:fixed_point} at the fixed point, 
the PDFs $b_0, b_l, \tilde{b}_l, q$ satisfy the moment-matching constraints in \eqref{eq:opt_moment}.
Consequently, the selection of the Lagrange multipliers in \eqref{eq:Lagrange_multipliers} is valid, since it satisfies the equality constraints.

\bibliographystyle{IEEEtran}
\bibliography{reference.bib}

\end{document}